\def\dOi{10(1:14)2014}
\subjclass{F.4.1 Temporal Logic}
\DeclareMathSymbol{\Not}{\mathrel}{symbols}{"36}
\DeclareMathSymbol{\not}{\mathrel}{symbols}{"36}
\def\NEQ{\Not=} 
\def\nat{\mathbb{N}}
\def\real{\mathbb{R}}
 \def\vp{\varphi}
 \def\mM{\mathcal{M}}
\def\mL{\mathcal{L}}
\def\mE{\mathcal{E}}
\def\mC{\mathcal{C}}
\def\Oc{\mathit{O}}
\def\Inf{\mathit{INF}^{\neg\beta_1}}
\def\INF{\mathit{INF}}
\def\form{\mathit{Form}}
\def\TL{\mathit{TL}}
\def\TLs{\mathit{TL}(\until,\since,\suntil,\ssince)}
\newcommand     {\KMINUS}   {\textbf{\textsf {K}} ^{-}}
\newcommand     {\KPLUS}    {\textbf{\textsf {K}} ^{+}}
\newcommand{\until}{{\sf Until}}
\newcommand{\since}{{\sf Since}}
\newcommand{\suntil}{{\sf Until}^s}
\newcommand{\ssince}{{\sf Since}^s}
\newcommand{\G}{{\square}}
\newcommand{\Gb}{\overleftarrow{{\square}}}
\def\andl           {\wedge}                
\def\AND            {\bigwedge}
\def\orl            {\vee}                  
\def\not            {\neg}
\def\E              {\exists}
\def\A              {\forall}
\def\true              {\text{True}}
\def\false              {\text{False}}
\def\cond              {\mathit{Cond}}
\def\In1{\  \text{In}_{(d,c)}}
\newcommand{\EA}    { \overrightarrow{\exists}  \forall}
\newcommand{\DEA}    {\vee\overrightarrow{\exists} \forall}
\newcommand{\Aub}[2]    {(\A {#1})^{<{#2}}}                         
\newcommand{\Intprop}[2] {\Aub y {#2} {#1}(y)}              
\newcommand{\Alub}[3] {(\A {#1})_{> {#2}}^{< {#3}}}     
\newcommand{\intprop}[3] {\Alub y {#2} {#3} {#1}(y)}    
\newcommand{\Elub}[3] {(\E {#1})_{> {#2}}^{< {#3}}}     
\numberwithin{subcase}{case}
\newcommand{\emp}[1]    {\textbf {\textit{#1}}}
\def\FOMLO      {First-Order Monadic Logic of Order}
\def\FOMLOi     {\textit{FOMLO}}
\def\FOMLOiB    {\textit{FOMLO} }
\newcommand{\IFOMLO}{\mathit{FOMLO}}
\def\mI{\mathcal{I}}
\newcommand{\g}[1]      {\mathcal{#1}}
\begin{document}
\title[A Proof of Kamp's theorem]{A Proof of Kamp's theorem}

\author{Alexander Rabinovich}

\address{The Blavatnik  School \  of Computer Science, Tel Aviv University}
        \email{ rabinoa@post.tau.ac.il}

\keywords{Temporal Logic, Monadic Logic, Expressive Completeness}

\begin{abstract}
      We provide   a  simple proof  of Kamp's theorem.
\end{abstract}
\maketitle

\section{Introduction}\label{sect:intro}


Temporal Logic ($\TL$) introduced  to Computer Science by Pnueli in
\cite{Pnu77}
 is a convenient framework for 
reasoning about ``reactive'' systems.  This has  made temporal logics  a
popular subject in the Computer Science community, enjoying
extensive research in the past 30 years.
In $\TL$ we describe basic system properties by {\it atomic
propositions\/} that hold at some points in time, but not at others.
More complex properties are expressed by formulas built from the
atoms using Boolean connectives and {\it Modalities\/} (temporal
connectives):
A $k$-place modality $M$ transforms statements $\varphi_1,\dots,
\varphi_k$ possibly on `past' or `future' points to a statement
$M(\varphi_1,\dots,\varphi_k)$ on the `present' point $t_0$.
The rule to determine the truth of a statement
$M(\varphi_1,\dots,\varphi_k)$ at $t_0$ is called a {\it truth
table\/} of $M$.
 The choice of particular modalities with their truth
tables yields different temporal logics. A temporal logic with
modalities   $M_1,\dots, M_k$ is denoted by
 $\TL(M_1,\dots ,M_k)$.

The simplest example is the one place modality $\lozenge P$ saying:
``$P$  holds some time in the future.'' Its truth table  is  
formalized by $\varphi_{_{\lozenge}} (x_0,P):=  \exists x(x>x_0\wedge
P(x))$. This is a formula  of the First-Order  Monadic Logic of
Order (\FOMLOi) - 
a fundamental formalism in Mathematical Logic
  where formulas are built using atomic propositions $P(x)$,
   atomic relations between elements $x_1=x_2$, $x_1<x_2$, Boolean
connectives and  first-order quantifiers $\exists x$  and $\forall
x$.
Two
more natural modalities are the modalities $\until $
(``\emph{Until}'') and $\since $ (``\emph{Since}''). $X\until Y$ means
that $X$ will hold from now until a time in the future when $Y$
will hold. $X\since Y$ means that $Y$ was true at some point of time
in the past and since that point $X$ was true until (not
necessarily including) now.
 Both
modalities have truth tables in \FOMLOi.
 Most modalities used in the literature are defined by such
\FOMLOi \
truth tables, and as a result, every temporal formula 
translates directly into an equivalent  $\IFOMLO$ formula. Thus, the
different temporal logics may be considered as a convenient way to use
fragments of $\IFOMLO$.  $\IFOMLO$  can also
serve as a yardstick by which  one is able to check the strength of temporal logics: 
A temporal logic is {\it expressively complete\/} for a fragment $L$
of $\IFOMLO$ if every formula of $L$ with a single free variable
$x_0$ is equivalent to a temporal formula.

Actually, the notion of expressive completeness   refers to a temporal logic and to a model (or a class of models),
 since the
question whether two formulas are equivalent depends on the domain
over which they are evaluated.  Any  (partially) ordered set with
monadic predicates is a model for $\TL$ and \FOMLOi, but the main,
{\it canonical\/}, linear time intended models  are the non-negative
integers $\langle \nat,~<\rangle$  for discrete time and the
 reals $\langle \real,<\rangle$  for continuous time.

 Kamp's theorem \cite{Kam68}
  states that the temporal logic with modalities $\until$ and $\since$
is expressively
complete for $\IFOMLO$ over the above two linear time  canonical\footnote{the technical notion which unifies  $\langle \nat,~<\rangle$  and
$\langle \real,<\rangle$ is Dedekind completeness.}
models.
\begin{quote}
This  seminal theorem  initiated the whole study of expressive
completeness,  and it remains one of the most interesting and
distinctive results in temporal logic; very few, if any, similar
`modal' results exist. Several alternative proofs of it and stronger
results have appeared; none of them are trivial (at least to most
people)  \cite{HR06}.
\end{quote}
The objective of this paper is to provide a simple proof of Kamp's  theorem.

The rest of the paper is organized as follows: In Section
\ref{sect:prel} we recall the definitions of the monadic logic,
the temporal logics and state  Kamp's  theorem.
Section  \ref{sect:ea-formulas}   introduces formulas in a normal form and states their simple properties. In Section  \ref{sect:kamp} we prove  Kamp's theorem.
The proof of one proposition is postponed to Section \ref{sect:neg}.
Section \ref{sect:related}  comments  on  the   previous  proofs of Kamp's theorem.
Finally, in Section \ref{sect:future}, we  show that our proof can be easily modified to prove expressive completeness for the future fragment
of $\IFOMLO$.

\section{Preliminaries}\label{sect:prel}
In this section
 we recall the definitions of    the first-order  monadic logic of order,
the temporal logics and state  Kamp's  theorem.

 Fix a set $\Sigma$ of \emph{atoms}. We use
$P, Q, R, S \dots$ to denote members of $\Sigma$.
The syntax and semantics of both logics are defined below with respect
to such  $\Sigma$.


\subsection{\FOMLO} \label{subsect:fomlo}


\emp{Syntax:} In the context of \FOMLOi, the atoms of $\Sigma$ are
referred to (and used) as \emph{unary predicate symbols}. Formulas
are built using these symbols,  plus two binary relation symbols:
$<$
and $=$, and a  set of {first-order  variables} 
(denoted: $ x,y, z, \dots $).
Formulas are defined by the grammar:
$$atomic ::= ~~x < y ~~|~~ x = y ~~|~~P(x)~~~~~~~~~~ (\mbox{where} ~~ P \in \Sigma)$$
$$\varphi ::= ~~ atomic ~~|~~ \neg \varphi_1 ~~|~~ \varphi_1 \orl \varphi_2 ~~|~~ \varphi_1 \andl \varphi_2 ~~|~~ \exists x \varphi_1 ~~|~~ \forall x \varphi_1$$
We  also use the standard abbreviated notation for
\emp{bounded quantifiers}, e.g., $(\exists x)_{> z}(\dots)$ denotes
$\exists x ((x > z) \andl (\dots))$, and $(\forall x)^{<
z}(\dots)$ denotes $\forall x ((x < z) \rightarrow (\dots))$,  and  $(\Alub{x}{z_1}{z_2}
 (\dots)$ denotes $\forall x ((z_1<x < z_2) \rightarrow (\dots))$,
etc.

\noindent \textbf{{Semantics.}}
Formulas are interpreted over \textit{labeled  linear orders} which are  called \emph{chains}. A $\Sigma$-\emp{chain}
 is a triplet $\mM = ( {T}, <, \mI)$ where
${T}$ is a set - the \emph{domain} of the chain, 
$<$ is a linear  order relation on ${T}$, 
and $\g{I} : {\Sigma} \rightarrow \g{P} ({T})$ is the
\emph{interpretation} of  $\Sigma$ (where $\g{P}$ is the powerset
notation). We use the standard notation $\mM, t_1, t_2, \dots t_n
\models \varphi(x_1, x_2, \dots x_n)$ to indicate that the formula $\varphi$ with free variables among $x_1,\dots,x_n$
is satisfiable in $\mM$ when $x_i$ are interpreted as elements $t_i$ of $\mM$.
For atomic $P(x)$ this is defined by: $\mM, t \models P(x)$ iff $t
\in \g{I}(P)$; the semantics of $<, =, \neg, \andl, \orl, \exists
\mbox { and } \forall$ is defined in a standard way.


\subsection{$\TL(\until,\since)$ Temporal Logic } \label{subsect:TL}

In this section we recall  the syntax and semantics of a temporal logic with
\emp{strict}-$\until$ and \emp{strict}-$\since$   modalities, denoted by $\TL(\until,\since)$.

In the context of temporal logics, the atoms of $\Sigma$ are
used as \emph{atomic propositions} (also called \emph{propositional
atoms}). Formulas of  $\TL(\until,\since)$ are built using these atoms, Boolean connectives  and
\emp{strict}-$\until$ and \emp{strict}-$\since$ binary  modalities.
The formulas are
  defined by the grammar:
$$F ::= ~~ \true~~|~~P ~~|~~ \neg F_1 ~~|~~ F_1 \orl F_2 ~~|~~ F_1 \andl F_2 ~~|~~ F_1\until F_2~~| ~~F_1\since F_2,$$
where $P \in \Sigma$. 

\noindent \textbf{{Semantics.}} 
Formulas are interpreted at \emph{time-points} (or \emph{moments}) in
chains  (elements of the domain).
The semantics of  $\TL(\until,\since)$ formulas is   defined inductively:
Given a chain  $\mM = ( {T}, <, \mI)$ and
 $t \in
T$,
 define when a 
 formula $F$ \emph{holds} in $\mM$ at $t$ - denoted $\mM, t \models F$:
\begin {itemize}
    \item   $\mM, t \models P$ iff $t \in \g{I}(P)$, for any propositional atom $P$.
    \item   $\mM, t \models F_1 \orl F_2$ iff $\mM, t \models F_1$ or $\mM, t \models F_2$;
                similarly  for $\andl$ and $\neg$.
\item $\mM, t \models F_1\until  F_2 $    iff there is $t'> t$ such that
   $\mM,t'\models F_2$ and $\mM, t_1 \models F_1$ for all $t_1\in (t,t')$.

\item $\mM, t \models F_1\since   F_2 $    iff there is $t'< t$ such that
   $\mM,t'\models F_2$ and $\mM, t_1 \models F_1$ for all $t_1\in (t',t)$.
\end {itemize}
We will use standard abbreviations. As usual
$\G  F$ (respectively, $\Gb F  $)  is an abbreviation for $\neg( \true\until (\neg F))$ (respectively, $\neg( \true\since (\neg F))$), and
 $\KPLUS(F)$   (respectively,
  $\KMINUS(F)$)   is an abbreviation for  $\neg((\neg F) \until \text{True})$
(respectively,  $\neg((\neg F) \since \text{True})$).
\begin{enumerate}
\item $\G F $  (respectively, $\Gb F $) holds at $t$  iff  $F$ holds everywhere after (respectively, before)  $t$.
 \item  $\KMINUS(F)$   holds at a moment $t$ iff
$t=\sup(\{t' \mid t'<t\mbox{ and } F \mbox{ holds at } t'\})$.

 \item  
   $\KPLUS(F)$   holds at a moment $t$ iff
$t=\inf(\{t' \mid t'>t\mbox{ and } F \mbox{ holds at } t'\})$.
\end{enumerate}
Note that $\KPLUS (\true)$ (respectively, $\KMINUS(\true)$) holds at
 $t$ in $\mM$ if $t$ is a right limit (respectively, a left limit) point of the underlining
order.
In particular, both  $\KPLUS (\true)$ and $\KMINUS(\true)$ are equivalent to $\false$ in  the chains     over $(\nat,<)$,


\subsection{Kamp's   Theorem} \label{subsect:exComp}


{Equivalence} 
between temporal and monadic formulas is naturally defined: $F$ is
equivalent to $\varphi(x)$  over a class $\mC$ of structures iff for
any $\mM\in \mC$   and $t \in \mM$: $\mM, t \models F
\Leftrightarrow \mM, t \models \varphi(x)$. If $\mC$ is the  class of
all chains, we will say that $F$ is equivalent to $\varphi$.

A linear order $(T,<)$ is \emph{Dedekind complete}
if  for every non-empty subset $S$ of $T$, if $S$ has a lower bound
in $T$ then it has a greatest lower bound, written $\inf(S)$, and if
$S$ has an upper bound in $T$ then it has a least upper bound,
written $\sup(S)$.
The canonical linear time models $(\nat,<)$ and $(\real,<)$   are Dedekind complete, while
the order of the  rationals is not Dedekind complete. A chain is Dedekind complete if its underlying linear order
is Dedekind complete.

The fundamental theorem  of Kamp's
states that
$\TL(\until,\since)$ is expressively equivalent to \FOMLOiB over {Dedekind complete } chains.  
\begin{thm} [Kamp \cite{Kam68}] \label{th:kamp}
     \begin{enumerate}
     \item Given any $\TL(\until,\since)$  formula $A$ there is a  $\FOMLOiB$ formula $\varphi_A(x)$ which is equivalent to $A$
     over all  chains.
     \item Given any $\FOMLOiB$ formula $\varphi(x)$ with one free variable,
     there is a $ \TL(\until,\since)$  formula   which is equivalent to $\varphi$
     over Dedekind complete chains.
     \end{enumerate}
\end{thm}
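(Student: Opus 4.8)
The plan is to treat the two directions very differently: part (1) is a routine unfolding of truth tables, whereas part (2)---expressive completeness---carries all the mathematical content.

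For part (1) I would proceed by structural induction on the $\TL(\until,\since)$ formula $A$. Atoms and Boolean connectives translate transparently, and each modality is replaced by its first-order truth table: $F_1\until F_2$ is rendered as $\exists x\,(x>x_0 \wedge \varphi_{F_2}(x) \wedge \forall y\,((x_0<y<x)\to \varphi_{F_1}(y)))$, and $\since$ symmetrically with $x<x_0$. Since these truth tables merely transcribe the semantic clauses verbatim, the resulting $\IFOMLO$ formula $\varphi_A(x)$ is equivalent to $A$ over every chain, and no completeness hypothesis is needed here.

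For part (2) I would split the work into a reduction step and a translation step. The reduction is the normal-form theorem of Section \ref{sect:ea-formulas}: every one-variable $\IFOMLO$ formula $\varphi(z)$ is to be shown equivalent to a Boolean combination of atoms at $z$ together with ``future-basic'' and ``past-basic'' $\EA$-formulas, each asserting the existence of a point on one side of $z$ at which a simpler condition holds while a second simpler condition holds throughout the interval between $z$ and that point. The translation then sends a future-basic formula to an instance of $\until$, a past-basic one to an instance of $\since$, with the behaviour at the far endpoint of the interval---the supremum or infimum of the set of candidate witnesses---handled by the limit modalities $\KPLUS$ and $\KMINUS$. Dedekind completeness enters crucially at this translation step and is indispensable: only when every bounded set of points has a supremum and an infimum inside the order can these interval and boundary conditions be captured by $\until$, $\since$, $\KPLUS$ and $\KMINUS$. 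Over a non-complete order such as $\Q$ the argument breaks down, and indeed there are one-variable $\IFOMLO$ formulas with no $\TL(\until,\since)$ equivalent, so the hypothesis cannot be dropped.

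The step I expect to be the main obstacle is the normal-form reduction itself. The difficulty is that, although the target $\varphi(z)$ has a single free variable, its subformulas carry several free variables, so one cannot naively induct on $\varphi$ and read off temporal formulas; instead the nested quantifiers must be eliminated and re-assembled into the interval shape of the $\EA$-formulas, which is naturally carried out by a composition / quantifier-elimination argument on the order. Within this, the most delicate case is showing that the class of $\EA$-formulas is closed under negation---the complement of ``there is a witness beyond $z$ satisfying a uniform interval condition'' must itself be rewritten as a Boolean combination of basic future and past formulas. This is precisely the one proposition whose proof is deferred to Section \ref{sect:neg}, and it is the technical crux of the whole argument.
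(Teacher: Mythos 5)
Your architecture coincides with the paper's: part (1) by structural induction on truth tables; part (2) via a normal form of interval-style existential formulas (the $\EA$-formulas of Section \ref{sect:ea-formulas}), a direct translation of one-free-variable normal-form formulas into $\TL(\until,\since)$ (Proposition \ref{prop:form}), and closure under negation as the deferred crux (Proposition \ref{lem:neg}, proved in Section \ref{sect:neg}). However, your sketch omits the one idea that makes this plan work: the \emph{canonical $\TL(\until,\since)$-expansion}. Closure under negation is simply false for $\EA$-formulas whose point- and interval-conditions are quantifier-free over the original atoms --- the paper notes explicitly that the negation of the truth table of $P\until Q$ is not equivalent to any $\DEA$ formula. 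The induction only goes through because the conditions $\alpha_i,\beta_i$ are allowed to be arbitrary $\TL(\until,\since)$-definable predicates, so that any one-free-variable $\EA$-formula, once translated to a temporal formula, can be re-used as an \emph{atom} at later stages (this is how the proof of Proposition \ref{lem:neg} disposes of the formulas $\psi_0,\psi_1$ and of $\KPLUS(P_1)(z_0)$). Your phrase ``a simpler condition'' leaves the sense of ``simpler'' unspecified; read as quantifier-free over $\Sigma$, your normal-form claim is false and the negation case of the induction collapses, so as it stands this is a genuine gap.

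You also misplace where Dedekind completeness enters. You claim it is indispensable at the \emph{translation} step, with $\KPLUS$ and $\KMINUS$ capturing suprema and infima there. In the paper the translation (Proposition \ref{prop:form}) is a completeness-free formalization by nested $\until$/$\since$, valid over \emph{all} chains, and it uses no limit modalities at all. Completeness is used inside the negation proof: in Lemmas \ref{lem:occ} and \ref{lem:neg1} one needs points such as $\inf\{z\in(z_0,z_1)\mid P_1(z)\}$ and $\inf\{z\in(z_0,z_1)\mid\neg\beta_1(z)\}$ to exist in the order, so that they can be named by the $\INF$-type formulas and used to cut the interval; that is where $\KPLUS$ (itself just an abbreviation of a $\TL(\until,\since)$ formula) appears. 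Your remark that the hypothesis cannot be dropped over $\Q$ is correct, but for this reason --- the normal-form reduction, not the translation, fails without completeness.
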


\noindent The meaning preserving translation from $\TL(\until,\since)$  to  $\FOMLOiB$
is easily obtained by structural induction.
The   contribution of our paper is a proof of Theorem \ref{th:kamp}
(2).
 The proof is constructive. An algorithm which for
every $\IFOMLO$ formula $\varphi(x )$ constructs a
$\TL(\until,\since)$ formula which is equivalent to $\varphi$ over
Dedekind complete chains is  easily extracted from our proof.
However, this  algorithms is not efficient in the sense of
complexity theory.
This is unavoidable because there is a non-elementary succinctness
gap between $\IFOMLO$ and $\TL (\until , \since)$ even over the
class of finite chains, i.e.,  for every $ m,n\in \nat$ there is a
$\IFOMLO$ formula $\varphi(x_0)$ of
 size $|\varphi|>n$ which is not  equivalent (even over finite
chains) to any $\TL(\until,\since) $ formula of size $\leq \exp(m,
|\varphi|)$, where the $m$-iterated  exponential function $\exp(m,n)
$ is defined  by induction on $m$ so that $\exp(1,n)=2^n$, and
$\exp(m+1,n)=2^{\exp(m,n)}$.

\section{ ${\vec{\exists}} \forall $   formulas} \label{sect:ea-formulas}
First, we introduce $\EA$ formulas which are instances of the
Decomposition formulas of  \cite{GPSS80}.
\begin{defi} [$\EA$-formulas] %
    \label{def:simple}
    Let $\Sigma$ be a set of monadic predicate names.
    An  {$\EA$-formula} over $\Sigma $  is a formula of the form:
\begin{align*} \displaystyle
\psi(z_0,\dots, z_m) & := 
 \E x_n \dots \E x_1 \E x_0                                                                                                  \\
                        & \left(\bigwedge_{k=0}^m z_k=x_{i_k}
                        \right) \wedge
                          \left(x_n > x_{n-1} > \dots > x_1 > x_0 \right) ~~~~~~\mbox{``ordering of $x_i$ and $z_j$}"                                                 \\
                        & \andl~\AND _{j=0}^{n} \alpha_j(x_j)
                        \hspace{3.cm}~~~~~~~~~~~~~~~~~~~~~~~~~``\mbox{Each $\alpha_j$ holds at}~x_j"                             \\  
                        & \andl~\AND _{j=1}^{n} [\intprop {\beta_j} {x_{j-1}} {x_{j}}]                                    
                          \hspace{2.5cm}~~~~``\mbox{Each $\beta_j$ holds along}~(x_{j-1}, x_j)"                                                    \\
                         & \andl~(\forall y)_{>x_n} {\beta_{n+1}} (y)                                                                                              
                       \hspace{2.cm}~~~~~~~~~~~~``\beta_{n+1}~\mbox{holds everywhere  after}~x_n"                                      \\
                        &  \andl ~(\forall y)^{<x_0}
                        {\beta_0} (y)                                                                                              
                        \hspace{2.cm}~~~~~~~~~~~~~~~~``\beta_0~\mbox{holds everywhere before} ~x_0"                                      
\end{align*}
with a prefix of $n+1$ existential quantifiers
 and
with all $\alpha_j$, $\beta_j$  quantifier free formulas with one variable  over
$\Sigma$, and $i_0,\dots,i_m\in \{0,\dots,n\}$. ($\psi$   has $m+1$ free variables $z_0,\dots, z_{m}$ and
$n+1$ existential quantifiers, $m+1$ quantifiers    are dummy and are
introduced just in order to simplify notations.)
\end{defi}
It is clear that
\begin{lem}\label{lem:triv}\hfill
\begin{enumerate}

\item  Conjunction of   $\EA$-formulas  is equivalent to a disjunction of $\EA$-formulas. 
\item Every  $\EA$-formula  
is equivalent to a conjunction of $\EA$-formulas    with at most two
free variables.
\item For every $\EA$-formula $\vp$ 
 the formula  $\exists x \vp$ is equivalent to a
$\EA$-formula. 
\end{enumerate}
\end{lem}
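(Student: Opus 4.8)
The plan is to prove the three parts of Lemma~\ref{lem:triv} essentially by direct syntactic manipulation of the $\EA$-formulas, exploiting the rigid shape of Definition~\ref{def:simple}: a block of existentially quantified, strictly ordered witnesses $x_0 < \dots < x_n$, each carrying a ``point condition'' $\alpha_j$, with ``interval conditions'' $\beta_j$ holding on the open gaps between consecutive witnesses and the two unbounded tails. The key observation throughout is that an $\EA$-formula is completely determined by (a) the number and order of its witnesses, (b) how the free variables $z_k$ are identified with witnesses via the indices $i_k$, and (c) the quantifier-free labels $\alpha_j, \beta_j$. Each part below amounts to showing how to rearrange this data.

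For part~(1), I would take two $\EA$-formulas $\psi, \psi'$ and consider their conjunction. The witnesses of $\psi$ and $\psi'$ both sit in the same linear order, so the natural move is to case-split on all possible \emph{interleavings} (order types) of the two witness-blocks, including which witnesses coincide. For each fixed interleaving we obtain a single merged, strictly increasing chain of witnesses; on this merged chain the point-conditions combine by taking conjunctions $\alpha_j \wedge \alpha'_{j'}$ at coinciding points, and the interval-conditions combine by intersecting the two coverings of the line, again taking conjunctions $\beta \wedge \beta'$ on each refined gap (and on the two tails). Crucially, the free-variable identifications from both formulas must be made consistent with the chosen interleaving; interleavings that force two distinct free variables to be simultaneously below and above each other are simply discarded. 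Summing over all consistent interleavings yields a disjunction of $\EA$-formulas, which is exactly the claim.

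For part~(2), given an $\EA$-formula $\psi(z_0,\dots,z_m)$ I would project onto each pair of free variables. The idea is that the full ordering information among $z_0,\dots,z_m$ together with all the $\alpha,\beta$ constraints can be recovered from the pairwise constraints, because the witnesses form a single chain and each $z_k = x_{i_k}$ pins $z_k$ to a definite position in that chain. For each pair $(z_a, z_b)$ one writes an $\EA$-formula with free variables only $z_a, z_b$ that existentially quantifies the intervening witnesses and imposes the relevant slice of the point- and interval-conditions; the conjunction over all pairs reconstructs $\psi$ over the canonical Dedekind-complete models. The routine bookkeeping is in checking that no information is lost when passing from the global to the pairwise description, which holds because the global order is linear.

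For part~(3), I would push an existential quantifier $\exists x$ through an $\EA$-formula by treating the newly bound $x$ as one more element to be located within the witness chain; again I case-split on where $x$ falls relative to (and possibly equal to) the existing witnesses $x_0 < \dots < x_n$. In each case $x$ is simply absorbed as an additional existential witness of the $\EA$-formula, with its point-condition and the induced interval-conditions filled in appropriately (the interval through which $x$ was inserted splits into two sub-intervals carrying the same $\beta$), and $x$ is no longer free. The only subtlety, and \textbf{the step I expect to be the main obstacle}, is handling the cases where the quantified $x$ lies in an \emph{open} gap or in one of the two unbounded tails without coinciding with any witness: there one must ensure the resulting formula is still a genuine $\EA$-formula (all witnesses strictly ordered, all $\beta$-conditions correctly redistributed) and that the dummy-quantifier convention of Definition~\ref{def:simple} is respected. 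Since each case yields a single $\EA$-formula and there are finitely many cases, the result is a disjunction of $\EA$-formulas, which by part~(1) collapses back to the required form; this is why the lemma is asserted to be clear.
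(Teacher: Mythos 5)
Your parts (1) and (2) are, in outline, the intended arguments (the paper gives no proof at all, prefacing the lemma with ``It is clear that''): merging the two witness chains case-by-case over all consistent interleavings for (1), and cutting the formula at its free variables for (2). Two pieces of bookkeeping in your sketch deserve mention. In (1), when a witness of one formula falls \emph{strictly inside} a gap of the other, its point condition must be conjoined with that gap's $\beta'$; your phrasing only covers conjunction of $\alpha$'s at coinciding witnesses, and omitting this yields a strictly weaker formula. In (2), the portions of the formula lying to the left of the leftmost free variable and to the right of the rightmost one (including the two unbounded tails $\beta_0$ and $\beta_{n+1}$) are not ``between'' any pair, so they must be packaged into one-variable conjuncts; also, no Dedekind completeness is involved anywhere in this lemma --- it is a purely syntactic equivalence valid over all chains, and your appeal to ``canonical Dedekind-complete models'' is a red herring.

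Part (3) is where your proof genuinely fails, and ironically it is the one part that is a one-liner. You overlooked the role of the dummy quantifiers in Definition \ref{def:simple}: every free variable $z_k$ of an $\EA$-formula is already pinned to one of the existentially quantified witnesses by the conjunct $z_k=x_{i_k}$. Hence in $\exists x\,\vp$ the variable $x$ is not a new element ``to be located within the witness chain'' --- it \emph{is} the witness $x_{i_l}$ --- and $\exists x\,\vp$ is obtained by simply erasing the conjunct $x=x_{i_l}$, which turns that dummy quantifier into a genuine one and leaves, verbatim, a single $\EA$-formula with one fewer free variable. Your case analysis is therefore vacuous (every case other than ``$x$ coincides with $x_{i_l}$'' is contradictory), and, worse, your concluding step is invalid: you end with a disjunction of $\EA$-formulas and invoke part (1) to ``collapse'' it into a single $\EA$-formula. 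Part (1) gives no such thing --- it converts \emph{conjunctions} into disjunctions, not disjunctions into single $\EA$-formulas --- and there is no reason a disjunction of $\EA$-formulas should be equivalent to one $\EA$-formula; that failure of closure is exactly why the paper introduces the wider class of $\DEA$ formulas. So as written, your argument establishes at best that $\exists x\,\vp$ is a $\DEA$ formula, not the stated claim, and the missing idea that repairs it is the pinning observation above.
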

\begin{defi} [$\DEA$-formulas] A formula is a $\DEA$ formula if it is equivalent to a disjunction of
$\EA$-formulas.
\end{defi}
 \begin{lem}[closure properties]\label{lem:closure}
The set of $\DEA$ formulas is  closed under disjunction,
conjunction, and existential quantification.
\end{lem}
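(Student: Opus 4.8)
The plan is to reduce all three closure properties to the three facts already collected in Lemma~\ref{lem:triv}, using nothing more than the distributivity of the Boolean connectives and the commutation of $\exists$ with $\orl$. By the definition of $\DEA$, I may write each given formula as a disjunction $\bigvee_i \psi_i$ of $\EA$-formulas, and the task is to show that disjunction, conjunction, and existential quantification of such disjunctions again produce disjunctions of $\EA$-formulas.

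Closure under disjunction is immediate: the disjunction of $\bigvee_i \psi_i$ and $\bigvee_j \psi'_j$ is literally a disjunction of $\EA$-formulas, hence a $\DEA$ formula. For closure under conjunction, I would start from two $\DEA$ formulas $\bigvee_i \psi_i$ and $\bigvee_j \psi'_j$ and distribute $\andl$ over $\orl$, obtaining the equivalent formula $\bigvee_{i,j}(\psi_i \andl \psi'_j)$. Each conjunct $\psi_i \andl \psi'_j$ is a conjunction of two $\EA$-formulas, so by Lemma~\ref{lem:triv}(1) it is equivalent to a disjunction of $\EA$-formulas. Substituting these back turns $\bigvee_{i,j}(\psi_i \andl \psi'_j)$ into a disjunction of disjunctions of $\EA$-formulas, which collapses into a single disjunction of $\EA$-formulas, i.e.\ a $\DEA$ formula.

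For closure under existential quantification, I would take a $\DEA$ formula $\bigvee_i \psi_i$ and apply the standard equivalence $\exists x \bigvee_i \psi_i \equiv \bigvee_i \exists x \psi_i$. By Lemma~\ref{lem:triv}(3), each disjunct $\exists x \psi_i$ is equivalent to an $\EA$-formula, so the whole formula is again a disjunction of $\EA$-formulas, as required.

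Since every step reduces to Lemma~\ref{lem:triv} together with routine propositional and quantifier distributivity, there is no genuine obstacle here; the work has already been done in establishing Lemma~\ref{lem:triv}. The only point deserving a moment's care is checking that each distributivity rewriting preserves equivalence over \emph{every} chain (not merely over Dedekind complete ones), but this is standard, as the equivalences $\varphi \andl (\psi \orl \chi) \equiv (\varphi \andl \psi) \orl (\varphi \andl \chi)$ and $\exists x (\psi \orl \chi) \equiv (\exists x \psi) \orl (\exists x \chi)$ hold in all structures.
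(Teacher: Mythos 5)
Your proof is correct and is essentially the paper's own argument: the paper proves this lemma in one line, citing parts (1) and (3) of Lemma~\ref{lem:triv} together with distributivity of $\exists$ over $\vee$, which is precisely the reduction you carry out in detail. Your expansion (distributing $\andl$ over $\orl$, applying Lemma~\ref{lem:triv}(1) to each pairwise conjunct, and pushing $\exists$ through the disjunction before applying Lemma~\ref{lem:triv}(3)) is exactly what that one-line proof leaves implicit.
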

\begin{proof} By (1) and (3) of Lemma \ref{lem:triv}, and distributivity of $\exists$ over $\vee$.
\end{proof}
The set of $\DEA$  formulas is not closed under
negation\footnote{The truth table of $P\until Q$ is  an $\EA$ formula 
$(\exists x') _{> x} (Q(x') \andl (\forall y) _{> x} ^{< x'} P(y))$,
yet we can prove that its negation is not equivalent to any $\DEA$ formula.}. However,
we show   later (see Proposition \ref{prop:fo2ea}) that the negation
of a  $\DEA$  formula is equivalent to a $\DEA$  formula in the expansion
of the  chains by all
 $\TL(\until,\since)$ definable predicates.

The  $\DEA$  formulas with one free variable can be easily
translated to
 $\TL(\until,\since)$.

\begin{prop}[From $\DEA$-formulas to $\TL(\until,\since)$ formulas]
\label{prop:form} 
Every $\DEA$-formula     with one free  variable  is equivalent
  to a  $\TL(\until,\since)$ formula.
\end{prop}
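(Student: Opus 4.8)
The plan is to reduce the statement to the translation of a single $\EA$-formula and then translate that formula by a ``telescoping'' nested use of $\until$ and $\since$.

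First, a $\DEA$-formula with one free variable is, by definition, equivalent to a disjunction of $\EA$-formulas, each with the same single free variable $z_0$; since $\TL(\until,\since)$ is trivially closed under $\orl$, it suffices to translate one $\EA$-formula $\psi(z_0)$. With a single free variable ($m=0$) the clause $\bigwedge_{k=0}^m z_k = x_{i_k}$ forces $z_0 = x_{i_0}$ for a fixed index $i_0 \in \{0,\dots,n\}$, so I regard $x_{i_0}$ as the \emph{present} point at which the temporal formula is to be evaluated. I also note that each $\alpha_j$ and $\beta_j$ is quantifier-free with one variable, hence a Boolean combination of propositional atoms, i.e.\ already a $\TL$ formula; I reuse the same names for these propositional $\TL$ formulas.

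Next I construct the ``future'' and ``past'' half-chains separately. For the future I define formulas $\Phi_j$ (meant to hold at $x_j$) by downward induction from $j=n$:
$$\Phi_n := \alpha_n \andl \G\,\beta_{n+1}, \qquad \Phi_j := \alpha_j \andl (\beta_{j+1}\until \Phi_{j+1}) \quad (i_0\le j<n).$$
Symmetrically, for the past I define $\Psi_j$ by induction from $j=0$:
$$\Psi_0 := \alpha_0 \andl \Gb\,\beta_0, \qquad \Psi_j := \alpha_j \andl (\beta_{j}\since \Psi_{j-1}) \quad (0< j\le i_0).$$
The candidate translation is then $A_\psi := \Phi_{i_0}\andl \Psi_{i_0}$, and the full translation of the original $\DEA$-formula is the disjunction of the $A_\psi$'s. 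The point of the construction is that the strict semantics of $\until$ produces, at $x_j$, a strictly later witness $x_{j+1}$ with $\beta_{j+1}$ holding throughout the \emph{open} interval $(x_j,x_{j+1})$ -- matching exactly the bounded quantifier $\intprop{\beta_{j+1}}{x_j}{x_{j+1}}$ in the $\EA$-formula -- while the innermost $\G$ (resp.\ $\Gb$) encodes the tail conditions $\beta_{n+1}$ after $x_n$ (resp.\ $\beta_0$ before $x_0$). The conjunction $\Phi_{i_0}\andl\Psi_{i_0}$ glues the two half-chains at the present point, with $\alpha_{i_0}$ appearing (harmlessly) in both.

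Correctness I would establish by a straightforward induction: I claim that $\mM,t\models \Phi_j$ holds iff there are points $t=s_j<s_{j+1}<\dots<s_n$ with $\alpha_k$ true at $s_k$, with $\beta_{k+1}$ true throughout $(s_k,s_{k+1})$, and with $\beta_{n+1}$ true everywhere after $s_n$; dually for $\Psi_j$. Unwinding these two equivalences at $t=x_{i_0}$ reconstructs precisely the existential witnesses and interval conditions of $\psi$, so $\mM,t\models A_\psi$ iff $\mM,t\models\psi(z_0)$. The main thing to get right -- and the only real obstacle -- is the interval bookkeeping in this induction: one must check that the open-interval conventions of the strict $\until$/$\since$ tables align with the open bounded quantifiers of Definition \ref{def:simple}, and that the degenerate boundary cases ($i_0=0$, $i_0=n$, or $n=0$) are correctly absorbed by the base cases using $\G$ and $\Gb$.
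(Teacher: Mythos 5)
Your proposal is correct and takes essentially the same approach as the paper: both reduce to a single $\EA$-formula with free variable $z_0=x_{i_0}$, treat the quantifier-free one-variable $\alpha_j,\beta_j$ as propositional formulas, and translate by conjoining a nested $\until$-formula (innermost conjunct $\alpha_n\andl\G\beta_{n+1}$) for the chain after the present point with a nested $\since$-formula (innermost conjunct $\alpha_0\andl\Gb\beta_0$) for the chain before it. Your recursive definitions $\Phi_j,\Psi_j$ unfold to exactly the paper's displayed nested formulas, and your index bookkeeping ($\beta_j$ guarding $(x_{j-1},x_j)$ in the $\since$-part) is the intended one---the paper's displayed past conjunct, which starts $A_k\andl(B_{k-1}\since\cdots)$, contains an off-by-one slip in the subscripts of the $B$'s.
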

\proof
By a simple formalization we show that every   $\EA$-formula with
one free  variable  is equivalent
 to a  $\TL(\until,\since)$ formula. This immediately implies the
 proposition.

 Let $\psi(z_0)$ be an  $\EA$-formula
$$         \E x_n \dots \E x_1 \E x_0
~  z_0=x_k                          \andl~ \left(x_n > x_{n-1} > \dots
> x_1 > x_0\right)
                         \wedge ~\AND _{j=0}^{n} \alpha_j(x_j)$$

                          $$\wedge~
                             \AND _{j=1}^{n}  \intprop {\beta_j}
                             {x_{j-1}}  {x_{j}}
                          \wedge~\Intprop {\beta_0} {x_{0}}                                                                                                             %
                           \wedge ~(\forall y)_{>x_n}
                        {\beta_{n+1}} (y)
$$
Let $A_i$ and $B_i$ be temporal formulas equivalent to $\alpha_i$
and $\beta_i$ ($A_i$ and $B_i$ do not even use $\until$ and $\since$
modalities).
   It is
easy to see that $\psi$ is equivalent to the conjunction of
\[ A_k\wedge (B_{k+1}\until
(A_{k+1} \wedge (B_{k+2}\until  \cdots( A_{n-1}\wedge (B_n \until
(A_n \wedge \G B_{n+1}))\cdots ))))\]
 and
\[ A_k\wedge (B_{k-1}\since
(A_{k-1} \wedge (B_{k-2}\since (  \cdots A_{1}\wedge (B_{1} \since
(A_{0} \wedge \Gb B_0))\cdots ))  \]
\section{Proof of Kamp's theorem}\label{sect:kamp}
The next definition plays a major role in the proof   Kamp's   theorem  \cite{GPSS80}.

\begin{defi} Let $\mM$ be a $\Sigma$ chain. We denote by  $\mE[\Sigma]$  the set of unary predicate names
$\Sigma\cup\{A~\mid A$ is an $\TL(\until,\since)$-formula over $\Sigma ~\}$.
 The canonical $\TL(\until,\since)$-expansion of $\mM$ is  an expansion of
$\mM$ to an $\mE[\Sigma]$-chain,  where each predicate name $A\in
\mE[\Sigma]$ is interpreted as $\{a\in \mM\mid \mM,a\models A\}$\footnote{ We often  use  ``$a\in \mM$''
instead of ``$a$ is an element of the domain of $\mM$''}. We say
that first-order formulas in the signature $\mE[\Sigma] \cup\{<\}$
are equivalent over $\mM$ (respectively, over a class of
$\Sigma$-chains $\mC$)  if they are equivalent in the canonical
expansion of $\mM$ (in  the canonical expansion of every $\mM\in \mC$).
\end{defi}

Note that if $A$ is a $\TL(\until,\since)$ formula  over $\mE[\Sigma]$ predicates,
then it is equivalent to a $\TL(\until,\since)$  formula over $\Sigma$, and hence
to an atomic formula in the canonical $\TL(\until,\since)$-expansions.

In this section  and the next one we say that ``formulas are equivalent  in a
 chain $\mM$''   instead of ``formulas are equivalent  in the
canonical $\TL(\until,\since)$-expansion of $\mM$.'' The $\EA$ and $\DEA$ formulas
are defined as previously, but now they can use as atoms $\TL(\until,\since)$
definable predicates.

It is clear that all the results stated above 
hold for this modified notion of $\DEA$ formulas. In particular,
every $\DEA$ formula with one free variable is equivalent to an
$\TL(\until,\since)$ formula, and the set of $\DEA$ formulas is closed under
conjunction, disjunction and existential quantification. However, now
the set of $\DEA$ formulas is also closed under negation, due to
 the next proposition whose proof is postponed to Sect.
\ref{sect:neg}.

\begin{prop}(Closure under negation) \label{lem:neg}
 The negation of  $\EA$-formulas 
   with at most two free  variables
is equivalent over Dedekind complete chains  to a disjunction of
 $\EA$-formulas.
\end{prop}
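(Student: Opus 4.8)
The plan is to prove that the negation of an $\EA$-formula with at most two free variables is a $\DEA$-formula over Dedekind complete chains, working in the canonical $\TL(\until,\since)$-expansion so that all $\TL(\until,\since)$-definable predicates are available as atoms. The starting observation is that an $\EA$-formula $\psi(z_0,z_1)$ with (at most) two free variables asserts the existence of a strictly increasing chain $x_0 < \dots < x_n$ of witnesses, anchored at the free variables $z_0,z_1$, carrying pointwise constraints $\alpha_j$ and interval constraints $\beta_j$, together with the two unbounded side-constraints $\beta_0$ before $x_0$ and $\beta_{n+1}$ after $x_n$. The negation says that \emph{no} such chain exists. The difficulty is that a negated existential over an ordered tuple is, a priori, a universal statement about all candidate tuples, which is exactly the kind of formula that need not be $\DEA$ (the footnote warns that $\neg(P\until Q)$ fails without the expansion). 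So the whole point is to use the richness of the $\TL(\until,\since)$-expansion, plus Dedekind completeness, to turn this universal statement back into a disjunction of existential-chain statements.

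The key idea I would pursue is to describe, for a fixed pair of points $z_0 \le z_1$ (one should first split on the order type of $z_0,z_1$ and on whether they coincide, using Lemma \ref{lem:triv}(2) to reduce to the genuinely two-variable and one-variable pieces), what the canonical expansion records about the ``shape'' of the model between and around $z_0$ and $z_1$. Because the atoms now include every $\TL(\until,\since)$-formula, the predicates $\KPLUS$ and $\KMINUS$ let me detect limit points, and the boolean combinations of the $\alpha_j,\beta_j$ together with $\until/\since$ let me detect, at any point, whether a constraint $\beta$ holds throughout a maximal interval to the right or left. I would classify the possible ``failure reasons'' for $\psi$: an attempted witness chain can fail either because some pointwise $\alpha_j$ can never be met at an admissible position, or because some interval constraint $\beta_j$ is violated somewhere in $(x_{j-1},x_j)$, or because one of the unbounded tails $\beta_0,\beta_{n+1}$ fails. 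The aim is to show that the disjunction over a \emph{finite} catalogue of such explicitly-witnessed failure patterns is both necessary and sufficient for $\neg\psi$, and that each pattern is itself an $\EA$-formula (an existential assertion of a short ordered tuple of witnesses to the obstruction).

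Concretely, I expect the argument to proceed by induction on $n$, the length of the witness chain, reducing the negation of a length-$n$ constraint to a boolean combination of negations of shorter ones by ``cutting'' the existential prefix at the outermost (or innermost) quantifier and using the closure of $\DEA$ under disjunction, conjunction, and existential quantification (Lemma \ref{lem:closure}), which is already available in the expanded setting. At each cut I would introduce a $\TL(\until,\since)$-definable predicate — essentially an $\until$- or $\since$-formula built from the $A_i,B_i$ of Proposition \ref{prop:form} — that captures at a single point the truth value of ``a valid sub-chain can be completed to the right/left from here,'' thereby absorbing one universal quantifier into an atom. Dedekind completeness enters precisely where I must convert ``the constraint $\beta$ fails somewhere arbitrarily close to a bound'' into the existence of an $\inf$ or $\sup$ witness, so that the failure is realized at a definable point rather than only as a limit; this is also where $\KPLUS,\KMINUS$ are needed to handle limit points correctly.

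The main obstacle will be exactly this conversion of unbounded or limiting universal conditions into existentially witnessed ones: showing that every way in which $\psi$ can fail is captured by one of finitely many $\EA$-describable obstruction patterns, and conversely that each such pattern indeed precludes any witness chain. Bounding the catalogue of patterns by a finite set (depending only on $n$ and the finitely many boolean types of the $\alpha_j,\beta_j$) is what makes the result a genuine \emph{disjunction} of $\EA$-formulas rather than an infinitary statement, and getting this finiteness right — together with the correct use of Dedekind completeness at the boundary/limit cases — is where the real work lies. Once that is in place, the two-free-variable bookkeeping reduces to the one-free-variable case already handled, and assembling the pieces via the closure lemma completes the proof.
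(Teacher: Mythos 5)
Your skeleton does coincide with the paper's: split off the one-free-variable pieces of $\psi(z_0,z_1)$ (whose negations become atomic in the canonical expansion, via Proposition \ref{prop:form}), reduce everything to the two-variable ``between'' formula $[\alpha_0,\beta_1,\dots,\beta_n,\alpha_n](z_0,z_1)$ with both endpoints anchored (the paper's Lemma \ref{lem:neg1}), and attack that by induction on $n$, with Dedekind completeness and $\KPLUS$ turning limiting failures of a $\beta$ into a definable infimum witness (the paper's formulas $\INF$ and $\Inf$). But the engine you propose for the inductive step has a genuine gap. You want to absorb into a single $\TL(\until,\since)$-definable atom the property ``a valid sub-chain can be completed to the right from here.'' Such an atom exists --- it is the nested-Until formula $F_{i-1}:=\alpha_{i-1}\wedge(\beta_i\until F_i)$ of Corollary \ref{cor:cont} --- but only when the far end of the sub-chain is \emph{unconstrained}; even then the Untils can overshoot $z_1$, which is why Corollary \ref{cor:cont} needs a pull-back argument together with the occurrence Lemma \ref{lem:occ}. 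In the formula you must actually negate, both endpoints are pinned: the chain has to terminate exactly at $z_1$. ``Completable to the right ending exactly at $z_1$'' is an irreducibly two-point relation; no temporal formula evaluated at one point can express it, since its truth value cannot depend on where $z_1$ is. So the step ``absorbing one universal quantifier into an atom'' fails precisely in the case the proposition is about, and Lemma \ref{lem:closure} cannot rescue it either, since $\DEA$ formulas are closed under $\exists$ but not under $\forall$.

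What fills this hole in the paper is a device absent from your plan. The free-endpoint absorption is used only where it is legitimate: if $\beta_1$ holds throughout $(z_0,z_1)$ (Case 2 of the paper's proof), the position of the first witness becomes an existentially quantified left endpoint of the residual chain pinned at $z_1$, and the $\since$-mirror of Corollary \ref{cor:cont} applies. Otherwise (Case 3) one passes to the definable point $r_0=\inf\{z\in(z_0,z_1)\mid\neg\beta_1(z)\}$ --- here Dedekind completeness and $\KPLUS$ enter exactly as you anticipated --- and then \emph{decomposes the pinned chain formula at $r_0$}, case-analyzing which role $r_0$ could play in any hypothetical witness chain: either $r_0$ is the $i$-th witness (the formulas $A_i$ in the proof of Lemma \ref{lem:neg1}) or $r_0$ lies strictly inside the $i$-th gap (the formulas $B_i$). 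Each $A_i$ and $B_i$ is a conjunction of two chain formulas pinned on $(z_0,r_0)$ and on $(r_0,z_1)$, almost all of them strictly shorter, so their negations fall under the induction hypothesis (the boundary cases $B_1$ and $B_{n+1}$ are disposed of using the properties of $\Inf$ and Case 2), and the whole negation becomes $\Elub{z}{z_0}{z_1}\big(\Inf(z)\wedge\bigwedge_{i}\neg A_i\wedge\bigwedge_{i}\neg B_i\big)$, a $\DEA$ formula by closure. Your plan stops at producing the infimum witness; but knowing where $\beta_1$ first fails does not by itself explain why none of the (possibly continuum-many) candidate positions for $x_1$ below that point extends to a full chain reaching $z_1$. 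That is exactly what the decomposition-at-$r_0$ induction supplies, and without it --- or an equivalent idea --- your induction does not close.
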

As a consequence we obtain
\begin{prop}\label{prop:fo2ea}
 Every first-order formula
is equivalent over Dedekind complete chains  to a disjunction of
$\EA$-formulas.
\end{prop}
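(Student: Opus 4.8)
The plan is to prove Proposition~\ref{prop:fo2ea} by structural induction on first-order formulas, using the closure properties of $\DEA$-formulas that have already been assembled. The key observation is that the class of $\DEA$-formulas is essentially a normal form, and Lemma~\ref{lem:closure} together with Proposition~\ref{lem:neg} provides exactly the closure operations needed to match the grammar of \FOMLOi. So the strategy is to show that every first-order formula built from atoms using $\neg$, $\orl$, $\andl$, $\exists$ and $\forall$ can be rewritten as a $\DEA$-formula, processing the connectives one at a time from the inside out.

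First I would handle the base case: atomic formulas $x<y$, $x=y$, and $P(x)$. Each of these must be exhibited as a (disjunction of) $\EA$-formula directly. For instance $x_i < x_j$ is captured by the ordering conjunct of an $\EA$-formula with the two variables placed in the correct order and all the $\alpha$'s and $\beta$'s set to $\true$; the equality $x=y$ and the monadic atom $P(x)$ are handled similarly, using the $\alpha_j$ conjunct to record $P$ at the relevant point. The dummy-quantifier convention in Definition~\ref{def:simple} makes this bookkeeping routine. Then for the inductive step: disjunction and conjunction are immediate from Lemma~\ref{lem:closure}, and existential quantification $\exists x\,\vp$ is likewise covered by the same lemma. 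Universal quantification is reduced to the previous cases via $\forall x\,\vp \equiv \neg\exists x\,\neg\vp$, so the only genuinely new ingredient is negation.

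Negation is where the real content lies, and this is where Proposition~\ref{lem:neg} does the work. The subtlety is that Proposition~\ref{lem:neg} is stated only for negations of $\EA$-formulas with \emph{at most two} free variables, whereas in the induction I may need to negate a $\DEA$-formula with arbitrarily many free variables. The bridge is Lemma~\ref{lem:triv}(2): every $\EA$-formula is equivalent to a conjunction of $\EA$-formulas each with at most two free variables. Hence to negate a disjunction of $\EA$-formulas, I would first push the negation inside using De Morgan to obtain a conjunction of negated $\EA$-formulas; rewrite each $\EA$-disjunct via Lemma~\ref{lem:triv}(2) as a conjunction of two-variable $\EA$-formulas, so that its negation becomes a disjunction of negated two-variable $\EA$-formulas; apply Proposition~\ref{lem:neg} to each of those to get a $\DEA$-formula; and finally close up under conjunction and disjunction using Lemma~\ref{lem:closure}. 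The main obstacle, then, is ensuring this reduction to the two-variable case is sound and that the bounded-variable hypothesis of Proposition~\ref{lem:neg} is genuinely met before invoking it; everything else is an application of the already-established closure lemmas, which is why the proof can be kept short once the normal-form machinery of Section~\ref{sect:ea-formulas} is in place.
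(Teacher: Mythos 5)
Your proposal is correct and follows essentially the same route as the paper's own proof: structural induction where atomic formulas are exhibited directly as (quantifier-free) $\EA$-formulas, Lemma~\ref{lem:closure} handles $\orl$, $\andl$ and $\exists$, and negation is handled by first reducing to $\EA$-formulas with at most two free variables via Lemma~\ref{lem:triv}(2), then invoking Proposition~\ref{lem:neg} and closing up under conjunction and disjunction. The only cosmetic difference is that you spell out the reduction $\forall x\,\vp\equiv\neg\exists x\,\neg\vp$ and the atomic bookkeeping, which the paper leaves implicit.
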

\begin{proof} We proceed by   structural induction.
\begin{description}
\item[Atomic]
It is clear that every atomic formula is equivalent to a disjunction
of  (even quantifier free)  $\EA$-formulas.

\item[Disjunction] - immediate.

\item[Negation]
If $\vp$ is an $\EA$-formula, then by Lemma \ref{lem:triv}(2) it is
equivalent to a conjunction of $\EA$ formulas with at most two free
variables. Hence, $\neg\vp$ is equivalent to a  disjunction of $\neg
\psi_i$ where $\psi_i$ are $\EA$-formulas with at most two free
variables. By
 Proposition
\ref{lem:neg}, $\neg \psi_i$ is equivalent to a disjunction of $\EA$
formulas $\gamma_i^j$. Hence, $\neg\vp$ is equivalent to a
disjunction $\vee_i\vee_j \gamma_i^j$ of $\EA$ formulas.

If $\vp$ is a disjunction of $\EA$ formulas $\vp_i$, then $\neg\vp$
is equivalent to the conjunction of $\neg\vp_i$. By the above,
$\neg\vp_i$ is equivalent to a  $\DEA$ formula. Since, $\DEA$
formulas are closed under conjunction (Lemma \ref{lem:closure}), we
obtain that $\neg\vp$ is equivalent to a disjunction of $\EA$
formulas.

\item[$\exists$-quantifier]
For $\exists$-quantifier, the claim follows from Lemma \ref{lem:closure}.  \qedhere
\end{description}
\end{proof}

Now, we are ready to prove Kamp's Theorem:
\begin{thm}
For every  $\FOMLOiB$ formula $\varphi(x)$ with one free variable,
     a $ \TL(\until,\since)$  formula exists  that is equivalent to $\varphi$
     over Dedekind complete chains.
\end{thm}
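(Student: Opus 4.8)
The plan is to assemble the theorem directly from the machinery already in place, working throughout in the canonical $\TL(\until,\since)$-expansion of the chains, so that all $\TL(\until,\since)$-definable predicates are available as atoms. Given $\varphi(x)$ with a single free variable, I expect the statement to follow in three moves. First I would apply Proposition \ref{prop:fo2ea}: every first-order formula is equivalent over Dedekind complete chains to a disjunction of $\EA$-formulas. Since the structural-induction construction underlying that proposition preserves free variables, and $\varphi$ has the single free variable $x$, this yields a $\DEA$-formula $\psi(x)$ with one free variable that is equivalent to $\varphi$ over Dedekind complete chains (that is, in the canonical $\TL(\until,\since)$-expansion of each such chain). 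Note that $\psi$ may use $\TL(\until,\since)$-definable predicates from $\mE[\Sigma]$ as atoms.

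Second, I would apply Proposition \ref{prop:form} to $\psi(x)$: a $\DEA$-formula with one free variable is equivalent (over all chains) to a $\TL(\until,\since)$ formula $A$ built over the extended signature $\mE[\Sigma]$. Third, I would strip the extended atoms: by the remark following the definition of the canonical expansion, a $\TL(\until,\since)$ formula over $\mE[\Sigma]$ is equivalent to one over $\Sigma$, obtained by substituting, for each $\mE[\Sigma]$-atom, the $\Sigma$-formula that defines it. Composing the three moves, $\varphi(x)$ is equivalent to this final $\Sigma$-formula $A'$ over Dedekind complete chains. Because $A'$ mentions only the original predicates of $\Sigma$, its truth value in the canonical expansion of a chain coincides with its truth value in the chain itself, so the equivalence holds in the original signature, which is exactly what the theorem asks.

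The hard part is not in this theorem at all: the only non-routine ingredient is Proposition \ref{lem:neg} (closure under negation), which is precisely where Dedekind completeness is used and whose proof is deferred to Section \ref{sect:neg}. Through Proposition \ref{lem:neg} one gets Proposition \ref{prop:fo2ea}, after which the present statement is pure bookkeeping. The one point I would be careful about is the free-variable count: Proposition \ref{prop:form} applies only to $\DEA$-formulas with a single free variable, so I would make explicit that passing $\varphi(x)$ through Proposition \ref{prop:fo2ea} produces a $\DEA$-formula whose only free variable is $x$, rather than merely some $\DEA$-formula equivalent to $\varphi$.
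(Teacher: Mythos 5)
Your proposal is correct and follows essentially the same route as the paper: apply Proposition \ref{prop:fo2ea} to pass to a disjunction of $\EA$-formulas with the single free variable $x$, then apply Proposition \ref{prop:form} to translate it into a $\TL(\until,\since)$ formula. The extra care you take (tracking the free variable through Proposition \ref{prop:fo2ea}, and replacing $\mE[\Sigma]$-atoms by their defining $\Sigma$-formulas at the end) is exactly the bookkeeping the paper handles implicitly via the remark following the definition of the canonical expansion, so it is a faithful, slightly more explicit rendering of the same argument.
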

\begin{proof}
By Proposition \ref{prop:fo2ea}, $\vp(x)$ is equivalent over
Dedekind complete chains to a disjunction of $\EA$ formulas
$\vp_i(x)$. By Proposition  \ref{prop:form}, $\vp_i(x)$ is
equivalent to a $\TL(\until,\since)$ formula. Hence, $\vp(x)$ is
equivalent over Dedekind complete chains  to a $\TL(\until,\since)$
formula.
\end{proof}
This completes our proof of Kamp's theorem except Proposition \ref{lem:neg} which is  proved in the next section.
\section{Proof of Proposition  \ref{lem:neg}}\label{sect:neg}
 Let $\psi(z_0,z_1)$ be an  $\EA$-formula
\begin{gather*}        \E x_n \dots \E x_1 \E x_0
 [ z_0=x_m \wedge z_1=x_k                         \andl~ \left(x_0 < x_{1} < \dots
< x_{n-1} < x_n \right)
                         \wedge ~\AND _{j=0}^{n} \alpha_j(x_j)\\
                           \wedge~
                             \AND _{j=1}^{n}  \intprop {\beta_j}
                             {x_{j-1}}  {x_{j}}
                          \wedge~\Intprop {\beta_0} {x_{0}}                                                                                                             %
                           \wedge ~(\forall y)_{>x_n}
                        {\beta_{n+1}} (y)]
\end{gather*}
We consider two cases. In the first case  $k=m$, i.e., $z_0=z_1$ and in the second $k\NEQ m$.

If $k=m$,  then $\psi$ is equivalent to $z_0=z_1\wedge \psi'(z_0)$,
where $\psi'$ is an  $\EA$-formula. By Proposition  \ref{prop:form},
$\psi'$ is equivalent to an $\TL(\until, \since)$ formula $A'$.
Therefore, $\psi $ is equivalent to an  $\EA$-formula $\exists
x_0[z_0=x_0\wedge z_1=x_0\wedge A'(x_0)]$, and $\neg\psi$ is equivalent to a $\DEA$ formula $z_0<z_1\vee z_1<z_0 \vee  \exists
x_0[z_0=x_0\wedge z_1=x_0\wedge \neg A'(x_0)]$.

If $k\NEQ m$, w.l.o.g. we assume that $m<k$. Hence, $\psi$ is equivalent to a
conjunction of

 \begin{enumerate}
\item $\psi_0(z_0)$ defined as:\\
$   \E x_0 \dots \E x_{m-1} \E x_{m} [ z_0=x_m \andl~ \left(x_0 <
x_{1} < \dots < x_m  \right)
                         \wedge ~\AND _{j=0}^{m} \alpha_j(x_j)$
                          $$\wedge~
                             \AND _{j=1}^{m}  \intprop {\beta_j}
                             {x_{j-1}}  {x_{j}}
                          \wedge~\Intprop {\beta_0} {x_{0}}  ]                                                                                                           %
$$
\item $\psi_1(z_1)$ defined as:\\
$   \E x_k \dots \E x_{k+1} \E x_{n} [ z_1=x_k \andl~ \left(x_k <
x_{k+1} < \dots < x_n \right)
                         \wedge ~\AND _{j=k}^{n} \alpha_j(x_j)$
                          $$\wedge~
                             \AND _{j=k+1}^{n}  \intprop {\beta_j}
                             {x_{j-1}}  {x_{j}}
                           \wedge ~(\forall y)_{>x_n}
                         {\beta_{n+1}} (y)]
$$
\item $\vp(z_0,z_1)$ defined as:\\
 $         \E x_m \dots  \E x_k
 [                            \left(z_0=x_m < x_{m+1} < \dots
  < x_k=z_1 \right)
                         \wedge ~\AND _{j=m}^{k} \alpha_j(x_j)$
                                                   $$\wedge~
                             \AND _{j=m+1}^{k}  \intprop {\beta_j}
                             {x_{j-1}}  {x_{j}}]
$$
\end{enumerate}
The  first two formulas are $\EA$-formulas with one free variable.
Therefore, (by  Proposition  \ref{prop:form})  they are equivalent
to
  $\TL(\until,
\since)$ formulas (in the signature $\mE[\Sigma]$). Hence, their
negations are  equivalent (over the canonical expansions) to atomic (and hence to
$\EA$) formulas.

Therefore, it is sufficient to show that the negation of the third
formula is equivalent over  Dedekind complete chains to a
disjunction of $\EA$-formulas.
This is stated  in the following
   lemma:
\begin{lem}\label{lem:neg1}
The negation of any formula of the form
\begin{equation} \label{eq:1}        \E x_0 \dots  \E x_n
 [                            \left(z_0=x_0 <  \dots
  < x_n=z_1 \right)
                         \wedge ~\AND _{j=0}^{n} \alpha_j(x_j)
                                                    \wedge~
                             \AND _{j=1}^{n}  \intprop {\beta_j}
                             {x_{j-1}}  {x_{j}} ]
\end{equation}
where $\alpha_i,\beta_i$ are quantifier free, is equivalent (over
Dedekind complete chains) to a disjunction of  $\EA$-formulas.
\end{lem}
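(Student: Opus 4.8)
The plan is to reduce to the single interesting configuration and then decide, by a reachability/sweep analysis, exactly when an admissible chain $z_0=x_0<\cdots<x_n=z_1$ fails to exist. First I would dispose of the easy cases. The formula \eqref{eq:1} forces $z_0<z_1$ (recall $n\ge 1$), so if $z_0>z_1$ or $z_0=z_1$ the negation holds and is itself an $\EA$-formula (a pure order condition on the free variables). Likewise $\neg\alpha_0(z_0)$ and $\neg\alpha_n(z_1)$ each give an immediate $\EA$-obstruction. Thus the real content is: assuming $z_0<z_1$, $\alpha_0(z_0)$ and $\alpha_n(z_1)$, express ``no admissible placement of the interior waypoints $x_1<\cdots<x_{n-1}$'' as a $\DEA$-formula.

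Next I would recast \eqref{eq:1} as reachability in an $(n{+}1)$-state automaton read along $[z_0,z_1]$. For $0\le j\le n$ let $R_j(z_0,t)$ mean ``there is an admissible chain $z_0=x_0<\cdots<x_j=t$ for the first $j$ steps''; then \eqref{eq:1} equals $R_n(z_0,z_1)$, with the recursion $R_{j+1}(z_0,t)\equiv \alpha_{j+1}(t)\wedge (\E s)_{<t}[R_j(z_0,s)\wedge \intprop{\beta_{j+1}}{s}{t}]$, whose inner part is exactly $\beta_{j+1}\since R_j$ read at $t$. The point of this shape is that the \emph{unanchored} predicates $\hat R_j$ (obtained by dropping the requirement $x_0=z_0$) satisfy $\hat R_0\equiv\alpha_0$ and $\hat R_{j+1}\equiv\alpha_{j+1}\wedge(\beta_{j+1}\since \hat R_j)$, so each $\hat R_j$ is a genuine $\TL(\until,\since)$ formula and is therefore available as an atom in $\mE[\Sigma]$; dually I would introduce backward reachabilities $\hat L_j$ built with $\until$.

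The key structural observation is that the reachability profile is a \emph{function} of the chain and of the anchor $z_0$: as $t$ sweeps from $z_0$ to $z_1$ the set of reached levels evolves deterministically, its changes occur only at order-definable places (where some $\beta_j$-stretch breaks, located as a sup or inf by Dedekind completeness), and the truth of $R_n(z_0,z_1)$ is decided by this profile. I would therefore classify chains by their profile, show that each profile-class is defined by an $\EA$-formula — waypoints placed at the finitely many break points $z_0=u_0<\cdots<u_r\le z_1$, quantifier-free colour conditions at the $u_i$, and universal conditions $\intprop{\cdot}{u_{i-1}}{u_i}$ on the open intervals — and conclude that \eqref{eq:1} is the disjunction of the \emph{accepting} classes, so its negation is the disjunction of the \emph{rejecting} classes, a $\DEA$-formula. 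Concretely this runs as an induction on $n$: peeling the last step, $\neg R_n$ reduces, via the last $\neg\beta_n$-point $w:=\sup\{y\in(z_0,z_1):\neg\beta_n(y)\}$ (which exists by Dedekind completeness), to ``level $n{-}1$ is not reachable from $z_0$ anywhere in $[w,z_1)$'', which feeds the inductive hypothesis together with the closure properties of $\DEA$ already established (Lemma \ref{lem:closure}).

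The hard part will be the anchoring at $z_0$ together with the limit points. The $\since$/$\until$ recursion only sees ``some earlier point'', whereas \eqref{eq:1} demands that the chain begin exactly at the free variable $z_0$; turning the resulting universal statement ``every interior placement fails'' into the existential, $\overrightarrow{\exists}\forall$ shape of an $\EA$-formula is precisely where Dedekind completeness is indispensable — it supplies the critical suprema and infima at which each $\beta$-stretch first breaks or at which a level ceases to be reachable, and these critical points convert the $\forall\exists$ over the interval into an $\exists$ over finitely many witnesses with $\forall$-conditions in between. The most delicate sub-case is when such a critical point is a genuine limit that is not attained; here I would split on whether it is approached from the left or right using the limit-sensitive modalities $\KPLUS(\true)$ and $\KMINUS(\true)$, so that the boundary colour conditions are stated correctly, and I would use this same analysis to keep the number of profile-classes finite. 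The remaining bookkeeping — enumerating the finitely many profile-types and verifying that each class is $\EA$ — is then routine.
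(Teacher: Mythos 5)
You have correctly identified the overall shape of the argument (critical suprema/infima supplied by Dedekind completeness, and $\TL(\until,\since)$-definable reachability formulas such as your $\hat R_j$ used as atoms --- these are the paper's $F_i$ in mirrored form), but the two steps that would turn your outline into a proof are, respectively, false as stated and circular. The ``key structural observation'' fails: the level sets $\{t\in(z_0,z_1)\mid R_j(z_0,t)\}$ do not change at finitely many break points. Already $R_1(z_0,t)$ with $\beta_1\equiv\true$ and $\alpha_1=P$ is just $P(t)$, an arbitrary subset of the interval. Nor is this toggling harmless: take $n=2$, $\alpha_0=\alpha_2\equiv\true$, $\alpha_1=P$, $\beta_1\equiv\true$, $\beta_2=\neg Q$, with the $P$-points and $Q$-points interleaved as $a_1<q_1<a_2<q_2<\cdots$ inside $(z_0,z_1)$; then $R_2(z_0,t)$ holds for $t\in(a_k,q_k]$ and fails for $t\in(q_k,a_{k+1}]$, so the ``set of reached levels'' alternates infinitely often, and where $z_1$ sits in this alternation is exactly what decides $R_2(z_0,z_1)$. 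Hence there is no finite list $z_0=u_0<\cdots<u_r\le z_1$ of profile changes, no finite family of profile classes each defined by an $\EA$-formula, and the concluding move ``the negation is the disjunction of the rejecting classes'' has nothing to rest on.

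The fallback induction you sketch does not close this gap. Peeling the last level via $w=\sup\{y\in(z_0,z_1)\mid\neg\beta_n(y)\}$ reduces $\neg R_n(z_0,z_1)$ to ``$\neg R_{n-1}(z_0,x)$ for \emph{all} $x\in[w,z_1)$'', a universally quantified statement over an interval. The inductive hypothesis makes each $\neg R_{n-1}(z_0,x)$ a $\DEA$ formula with $x$ free, but Lemma \ref{lem:closure} gives closure only under $\vee$, $\wedge$ and $\exists$; closure under $\forall$ is equivalent to closure under negation, which is the very statement being proved, so invoking it here is circular. This is precisely the obstacle the paper's proof is engineered around: Corollary \ref{cor:cont} converts ``no $z$ in the interval satisfies the anchored formula'' into a $\DEA$ formula not by universally quantifying the inductive hypothesis, but by re-expressing the anchored existential through the temporal formulas $F_{i-1}:=\alpha_{i-1}\wedge(\beta_i\until F_i)$ (the nontrivial direction of that equivalence is proved by its own induction) and then applying Lemma \ref{lem:occ}, which is established separately by the greedy sup/inf argument. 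Moreover, since in your reduction the anchor $z_0$ and the quantification interval $[w,z_1)$ differ, one must additionally split the original formula at the critical point --- this is the $A_i$/$B_i$ decomposition and items (a)--(e) of Case 3 in the paper, together with the bridging equivalence relating anchored to unanchored reachability that your $\hat R_j$/$\hat L_j$ would need but which you never state. These mechanisms are the mathematical content of the lemma; in your proposal they occupy the phrases ``feeds the inductive hypothesis'' and ``the remaining bookkeeping is routine'', which is exactly where the proof is missing.
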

In the rest of this section we prove  Lemma
\ref{lem:neg1}. Our proof is organized as follows.  In Lemma \ref{lem:occ} we prove an instance  of  Lemma
\ref{lem:neg1} where $\alpha_0$, $\alpha_n$ and  all $\beta_i$ are equivalent to $\true$. Then we derive a more general  instance (Corollary \ref{cor:cont})  where $\beta_n$ is equivalent to true.
Finally we prove the full version of   Lemma
\ref{lem:neg1}.

First, we introduce some helpful  notations.
\begin {nota} \label{denote:simple}
    We use the abbreviated notation
    $ [ \alpha_0, \beta_1 , \dots,   \alpha_{n-1}, \beta_n,\alpha_n](z_0,z_1)$
    for the $\EA$-formula
      as  in (\ref{eq:1}).
\end {nota}
In this notation Lemma \ref{lem:neg1} can be rephrased as $\neg [
\alpha_0, \beta_1,  \dots,   \alpha_{n-1}, \beta_n,\alpha_n](z_0,z_1)$
is equivalent (over Dedekind complete chains) to a $\DEA$ formula.

 We start with the instance of  Lemma \ref{lem:neg1} where all $\beta_i$ are $\true$.
\begin{lem} \label{lem:occ}
$\neg \E x_1 \dots  \E x_n
                             \left(z_0<x_1 <  \dots
  < x_n<z_1 \right)     \wedge ~\bigwedge_{i=1}^{n}  P_i(x_i)$ is
 equivalent over Dede\-kind complete chains  to a $\DEA$ formula  $\Oc_n  (P_1, \dots, P_{n},z_0,z_1)$.
\end{lem}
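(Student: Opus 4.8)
The plan is to induct on $n$, reading the formula inside the negation as the assertion that the labelled pattern $P_1,\dots,P_n$ can be \emph{realised} as a strictly increasing tuple $z_0<x_1<\dots<x_n<z_1$; thus $\Oc_n$ must express that no such realisation exists. The degenerate disjunct $z_1\le z_0$, where the positive formula is vacuously false, is an $\EA$-formula and is carried along throughout, so I may assume $z_0<z_1$. For the base case $n=1$, non-realisability is simply $(\forall y)_{>z_0}^{<z_1}\neg P_1(y)$, which is an $\EA$-formula.

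The inductive step rests on the recursive observation that $P_1,\dots,P_n$ is realisable in $(z_0,z_1)$ iff some point $t\in(z_0,z_1)$ carries $P_1$ and the shorter pattern $P_2,\dots,P_n$ is realisable in $(t,z_1)$. The set of $t$ for which the shorter pattern is realisable in $(t,z_1)$ is downward closed in $(z_0,z_1)$ — enlarging the interval can only help — hence it is an initial segment. Here Dedekind completeness enters: it guarantees that this segment has a supremum $c\in[z_0,z_1]$. I then claim that non-realisability of $P_1,\dots,P_n$ in $(z_0,z_1)$ is equivalent to the existence of a cut point $c$ with (i) $\neg P_1$ holding throughout $(z_0,c)$, and (ii) $\Oc_{n-1}(P_2,\dots,P_n,c,z_1)$. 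The converse is short: any realisation places its first point $x_1$ with $P_1(x_1)$, and (i) forces $x_1\ge c$, whence the tail realises $P_2,\dots,P_n$ inside $(c,z_1)$, contradicting (ii). For the forward direction one takes $c$ to be the supremum above: (i) holds because every element of the initial segment fails $P_1$ (otherwise the full pattern would be realisable), and (ii) holds because a realisation of the tail strictly above $c$ would place points just above $c$ in the segment, contradicting that $c$ is its supremum.

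Once the equivalence is in place, writing $\Oc_n$ as a $\DEA$-formula is routine: $\Oc_{n-1}(P_2,\dots,P_n,c,z_1)$ is a $\DEA$-formula by the induction hypothesis, condition (i) together with $z_0<c$ is itself an $\EA$-formula (its interval conjunct being $(\forall y)_{>z_0}^{<c}\neg P_1(y)$), and introducing the quantifier $\exists c$ and forming the conjunction are permitted by the closure properties of Lemma~\ref{lem:closure}; the boundary values $c=z_0$ and $c=z_1$ contribute finitely many extra disjuncts handled directly. The main obstacle is the forward direction of the equivalence, and specifically the appeal to Dedekind completeness to produce $c$: over a non-complete order, such as the rationals, the segment of ``good'' $t$ may have no supremum, and then no single $c$ satisfies both (i) and (ii), so the characterisation — and the lemma — genuinely fail. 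Checking that the supremum is the correct cut, and that the open/closed status of the segment at $c$ affects neither direction, is the delicate point; everything else is bookkeeping.
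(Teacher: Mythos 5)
Your induction scheme, base case, recursive observation, and indeed the final shape of $\Oc_n$ (a disjunction containing $\exists c\,\bigl(z_0<c<z_1\wedge \Alub{y}{z_0}{c}\neg P_1(y)\wedge \Oc_{n-1}(P_2,\dots,P_n,c,z_1)\bigr)$ together with the boundary disjuncts) are all correct. The genuine gap is in your proof of the forward direction of the characterisation, where you take the witness $c$ to be the supremum of the good set $G=\{t\in(z_0,z_1)\mid P_2,\dots,P_n \text{ realisable in } (t,z_1)\}$. Your justification of (ii) --- ``a realisation of the tail strictly above $c$ would place points just above $c$ in the segment'' --- presupposes that there are points strictly between $c$ and the first point $x_2$ of such a realisation, i.e.\ it presupposes density at $c$. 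Dedekind complete chains need not be dense: $\langle\nat,<\rangle$ is one of the two canonical models this lemma must cover. Concretely, take the chain $\Z$, $z_0=0$, $z_1=10$, $n=2$, $P_1$ empty and $P_2$ true exactly at $5$. The pattern $P_1,P_2$ is not realisable in $(0,10)$, and $G=\{1,2,3,4\}$, so your $c=\sup G=4$; condition (i) holds, but $P_2$ is realisable in $(4,10)$, so (ii) fails at your witness --- here $x_2=5$ is the immediate successor of $c$, the interval $(c,x_2)$ is empty, and no contradiction with $c=\sup G$ ever arises. The ``delicate point'' you flag at the end (the open/closed status of the segment at $c$) is precisely where the argument breaks.

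The repair is to choose the cut point from the other side, and it is exactly what the paper does: when $(z_0,z_1)$ contains a $P_1$-point, let $c=\inf\{x\in(z_0,z_1)\mid P_1(x)\}$ (this is where Dedekind completeness is really invoked), and let $c=z_1$ otherwise. Condition (i) is then immediate from the definition of infimum; for (ii), if the tail were realisable in $(c,z_1)$ with first point $x_2>c$, then $x_2$ is not a lower bound of the $P_1$-points, so some $P_1$-point $x_1<x_2$ lies in $(z_0,z_1)$, and $x_1<x_2<\dots<x_n$ would realise the full pattern --- a contradiction requiring no density whatsoever. (In the example above this witness is $c=z_1=10$.) This is the content of the paper's formula $\INF$ in \eqref{eq:un0} and its case analysis on $r_0=\inf\{z\in(z_0,z_1)\mid P_1(z)\}$. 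Your $\exists c$ formulation of the formula itself can remain as is --- it is in fact slightly leaner than the paper's, which additionally pins $r_0$ down by the conjunct $P_1(r_0)\vee\KPLUS(P_1)(r_0)$ and keeps a separate $\KPLUS(P_1)(z_0)$ disjunct for the boundary case --- but the equivalence proof behind it must use the infimum of the $P_1$-points, not the supremum of $G$.
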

\begin{proof}
 We proceed by induction on $n$.

\emph{Basis}: $ \neg \Elub {x_1}{z_0}{z_1} P_1 (x_1) $ is equivalent
to $\Alub{y} {z_0}{z_1} \neg P_1(y)  $.

\emph{Inductive step}: $n\mapsto n+1$. We assume that a $\DEA$ formula  $O_n$
has already  defined and construct a $\DEA$ formula $O_{n+1}$.

Observe that if the interval  $(z_0,z_1)$ is non-empty, then   one of the following cases holds:
\begin{description}
\item[Case 1]
 $P_1$ does not occur in $(z_0,z_1)$, i.e.  $\Alub{y} {z_0}{z_1}\neg P_1( y)$.
 Then $\Oc_{n+1}(P_1, \dots, P_{n+1},z_0,z_1)$ should be equivalent to $\true$.
\item[Case 2] If case 1 does not hold  then let 
$r_0=\inf\{z\in (z_0,z_1)\mid P_1(z)\}$ (such $r_0$ exists by Dedekind completeness.
Note that $r_0=z_0$ iff $\KPLUS(P_1)(z_0)$. 
 If $r_0>z_0$ then $r_0\in (z_0,z_1)$ and
 $r_0$ is definable by the following  $\DEA$  formula:
\begin{align}
\label{eq:un0}
\INF(z_0,r_0,z_1,P_1):= &z_0<r_0<z_1 \wedge  \Alub {y}{z_0} {r_0}
\neg P_1(y) \wedge \notag \\ &
 ~~~~\wedge( P_1(r_0) \vee \KPLUS(
P_1)(r_0))
\end{align}
\begin{description}
\item[Subcase $r_0=z_0$] In this subcase $\Oc_n(P_2,\dots ,P_n,z_0,z_1)$  and  $\Oc_{n+1}(P_1, \dots, P_{n+1},z_0,z_1)$ should be equivalent.
\item[Subcase $r_0\in (z_0,z_1)$]
Now $\Oc_n(P_2,\dots ,P_n,r_0,z_1)$ and
$\Oc_{n+1}(P_1, \dots, P_{n+1},z_0,z_1)$ should be equivalent.
\end{description}
\end{description}
Hence,  $\Oc_{n+1}(P_1, \dots, P_{n+1},z_0,z_1)$  can be defined as
the  disjunction of ``$(z_0,z_1)$ is empty'' and  the following formulas:
\begin{enumerate}
\item
 $\Alub{y} {z_0}{z_1}\neg P_1((y)$
 \item
$\KPLUS(P_1)(z_0) \wedge \Oc_n(P_2,\dots ,P_n,z_0,z_1)$
\item
$\Elub{r_0} {z_0}{z_1}\big(\INF(z_0,r_0,z_1,P_1) \wedge
\Oc_n(P_2,\dots ,P_n,r_0,z_1)\big)$
\end{enumerate}
The first formula is
 a  $\DEA$ formula.
By the inductive assumptions $ \Oc_{n}$ is
 a  $\DEA$ formula.
 $\KPLUS(P_1)(z_0)$ is an atomic (and hence a $\DEA$) formula in the canonical expansion, and $ \INF(z_0,r_0,z_1,P_1)$ is
a $\DEA$ formula. Since  $\DEA$ formulas are closed under
conjunction, disjunction and the existential quantification, we conclude
that $\Oc_{n+1}$ is a $\DEA$ formula.
\end{proof}
As a consequence we obtain\newpage
\begin{cor} \label{cor:cont}\hfill
\begin{enumerate}
\item
$\neg\Elub{z}{z_0}{z_1} [\alpha_0, \beta_1, \alpha_1, \beta_2,
\dots,   \alpha_{n-1}, \beta_n,\alpha_n](z_0,z)$ over Dedekind complete chains is
 equivalent  to a $\DEA$ formula.

\item $\neg\Elub{z}{z_0}{z_1} [\alpha_0, \beta_1, \alpha_1, \beta_2,  \dots,   \alpha_{n-1}, \beta_n,\alpha_n](z,z_1)$ over Dedekind complete chains is
 equivalent  to a $\DEA$ formula.
\end{enumerate}
\end{cor}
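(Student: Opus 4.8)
The plan is to prove Part~(1) and obtain Part~(2) essentially for free: the whole framework ($\until$/$\since$, $\KPLUS$/$\KMINUS$, left/right limits, Dedekind completeness, and the symmetric shape of $\EA$-formulas in Definition~\ref{def:simple}) is invariant under reversing the order, so the formula in Part~(2) is literally the order-dual of the one in Part~(1), and negation commutes with this duality. For Part~(1), I would first unfold the bracket notation of Notation~\ref{denote:simple}: the formula $\Elub{z}{z_0}{z_1}[\alpha_0,\beta_1,\dots,\beta_n,\alpha_n](z_0,z)$ asserts the existence of an increasing chain $z_0=x_0<x_1<\dots<x_n$ with $x_n\in(z_0,z_1)$, where $\alpha_j$ holds at $x_j$, $\beta_j$ holds throughout each gap $(x_{j-1},x_j)$, and---crucially---nothing at all is required on $(x_n,z_1)$. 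Since $x_1<\dots<x_n<z_1$, every chain point other than $z_0$ lies strictly inside $(z_0,z_1)$, so this is exactly the bounded interior point-chain situation of Lemma~\ref{lem:occ}, except that Lemma~\ref{lem:occ} treats only the degenerate case in which there are no $\alpha$-labels and every $\beta_j$ is $\true$. The goal is therefore to feed the $\alpha$'s and $\beta$'s into that lemma and read off the negation.

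Everything is carried out inside the canonical $\TL(\until,\since)$-expansion, so every temporal formula is available as an atomic predicate; in particular the label $\alpha_0(z_0)$ is just an atom evaluated at the free variable $z_0$. To absorb the $\beta_j$'s I would collapse the ``shape'' of the tail of the chain into single atoms by downward induction, $F_n:=\alpha_n$ and $F_j:=\alpha_j\wedge(\beta_{j+1}\until F_{j+1})$, so that $(\beta_1\until F_1)$ expresses the forward part of the chain. The bound $x_n<z_1$ is then enforced by building the chain greedily from the left, taking at each step the infimum of the admissible positions---which exists by Dedekind completeness---exactly as the $\INF$ predicate is used in the proof of Lemma~\ref{lem:occ}. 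The negation of Part~(1) becomes a finite disjunction over the ways in which this greedy construction can break down (a $\neg\beta_j$ barrier is met before the next admissible $\alpha_j$-point, or the construction overshoots $z_1$), and each such disjunct is a $\DEA$-formula; closure of $\DEA$-formulas under disjunction, conjunction and $\exists$ (Lemma~\ref{lem:closure}), together with Lemma~\ref{lem:occ}, then yields the claim.

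The step I expect to be the real obstacle is the coupling introduced by the interval labels: ``$\beta_j$ holds throughout $(x_{j-1},x_j)$'' is not a property of $x_j$ alone, so the $\beta_j$'s cannot simply be folded into independent point predicates and handed to Lemma~\ref{lem:occ} verbatim. The way around it is to make the witnesses canonical---always the leftmost admissible point---so that ``the gap is filled by $\beta_j$'' turns into the point condition ``the previous admissible point is reached by a single $\until$/$\since$ step'', and then to reconcile the unbounded reach of $\until$/$\since$ with the outer bound $z_1$ through the same Dedekind-completeness infimum argument that drives Lemma~\ref{lem:occ}. The unconstrained final segment $(x_n,z_1)$ is what keeps the right-hand endpoint harmless here (this is the sense in which the outermost $\beta$ is effectively $\true$), and the order-dual of this observation handles the unconstrained initial segment arising in Part~(2).
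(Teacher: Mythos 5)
Your setup coincides with the paper's in two respects: Part~(2) is indeed obtained as the mirror image of Part~(1) (that is all the paper says about it), and your formulas $F_n:=\alpha_n$, $F_j:=\alpha_j\wedge(\beta_{j+1}\until F_{j+1})$ are literally the paper's. The divergence---and the gap---is in what you do with them. The paper proves one clean equivalence: there is $z\in(z_0,z_1)$ with $[\alpha_0,\beta_1,\dots,\beta_n,\alpha_n](z_0,z)$ iff $F_0(z_0)$ holds and there exist $x_1<\dots<x_n$ in $(z_0,z_1)$ with $F_i(x_i)$ for all $i$, where these $x_i$ are subject to \emph{no gap conditions whatsoever}. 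Once this is established, the corollary is immediate: the negation is $\neg F_0(z_0)\vee \Oc_n(F_1,\dots,F_n,z_0,z_1)$, so Lemma~\ref{lem:occ} is invoked as a black box and the corollary's own proof makes no new use of Dedekind completeness or infima at all. The nontrivial direction ($\Leftarrow$) of that equivalence is proved by induction, and its inductive step is an exchange argument: the chain produced by the inductive hypothesis ends at a point $y_1<x_{n+1}$ satisfying $\alpha_n\wedge(\beta_{n+1}\until\alpha_{n+1})$; taking the $\until$-witness $y_2>y_1$, either $y_2\le x_{n+1}$ and one ends the chain at $y_2$, or $x_{n+1}\in(y_1,y_2)$, so $\beta_{n+1}$ holds on $(y_1,x_{n+1})$ and one ends the chain at $x_{n+1}$.

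Your proposal replaces this with a greedy leftmost-witness construction (``taking at each step the infimum of the admissible positions'') and asserts that the negation becomes a finite disjunction over the ways the construction breaks down. That assertion is exactly what needs proof, and it is not obvious---it is the obstacle you yourself flagged, relocated rather than resolved. Choosing a smaller (leftmost) $x_j$ \emph{lengthens} the interval $(x_j,x_{j+1})$ on which $\beta_{j+1}$ must hold, so a leftmost admissible point could a priori fail to extend below $z_1$ even though some larger admissible point does; soundness of the greedy strategy is precisely the claim that this cannot happen, and proving it requires the same exchange argument as above (the $\until$-witness supplied by $F_j$ at the leftmost point either lands at or before the competing witness, or else overshoots it, in which case $\beta_{j+1}$ holds up to the competing witness and that witness can be reused). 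Your sketch never states or proves this step, so the case analysis of ``failure modes'' is unjustified; moreover, once the $F_i$'s are in play, the failure mode ``a $\neg\beta_j$ barrier is met'' is subsumed by $\neg F_0(z_0)$, which suggests the analysis has not been carried far enough to see where the real difficulty sits. With the exchange lemma supplied your route can be completed, but at that point the paper's route is strictly simpler: no canonical witnesses and no new infimum arguments, just the stated equivalence followed by a verbatim application of Lemma~\ref{lem:occ}, with all Dedekind-completeness reasoning quarantined inside that lemma.
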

\begin{proof} (1)
Define
\begin{align*}
F_n:=\alpha_n& \\
F_{i-1}:= \alpha_{i-1}\wedge (\beta_i \until F_i) &\qquad \text{ for
}i=1,\dots,n
\end{align*}
Observe that there is  $z\in (z_0,z_1)$   such that $ [\alpha_0,
\beta_1, \alpha_1, \beta_2,  \dots,   \alpha_{n-1},
\beta_n,\alpha_n](z_0,z)$  iff $F_0(z_0)$ and there is an increasing
sequence $x_1<\dots <x_n$ in an open interval $(z_0,z_1)$ such that
$F_i(x_i)$ for $i=1,\dots ,n$. Indeed, the direction $\Rightarrow$
is trivial. The  direction   $\Leftarrow$ is easily proved by
induction.

The \emph{basis} is trivial.

\emph{Inductive step:} $n\mapsto n+1$. Assume   $F_0(z_0)$ holds and that
$(z_0,z_1)$ contains an  increasing sequence $x_1<\dots <x_{n+1}$
such that $F_i(x_i)$ for $i=1,\dots ,n+1$. By the inductive
assumption  there is $y_1\in (z_0,x_{n+1})$ such that
  $$[\alpha_0, \beta_1, \alpha_1, \beta_2,  \dots,  \beta_{n-1} \alpha_{n-1}, \beta_n ,(\alpha_n\wedge \beta_{n+1}\until \alpha_{n+1})](z_0,y_1).$$

In particular,  $y_1$ satisfies $(\alpha_n\wedge \beta_{n+1}\until
\alpha_{n+1})$. Hence, there is $y_2>y_1$ such that $y_2$ satisfies
$\alpha_{n+1}$ and $\beta_{n+1}$ holds along $(y_1,y_2)$.

If $y_2\leq x_{n+1}$ then the  required $z\in (z_0,z_1)$ equals to $y_2$, and we
are done. Otherwise, $x_{n+1}< y_2$. Therefore, $x_{n+1}\in
(y_1,y_2)$ and $\beta_{n+1}$ holds along $(y_1,x_{n+1})$. Hence, the  required $z$ equals to $x_{n+1}$.

The above observation and Lemma \ref{lem:occ} imply that 
 $\neg F_0(z_0)\vee \Oc_n(F_1,\dots, F_n,z_0,z_1)$   is a  $\DEA$ formula that
is equivalent to
$\neg\Elub{z}{z_0}{z_1} [\alpha_0, \beta_1, \alpha_1, \beta_2,
\dots,   \alpha_{n-1}, \beta_n,\alpha_n](z_0,z)$.

(2) is the mirror image of (1) and is    proved similarly.
\end{proof}
Now we are ready to prove Lemma \ref{lem:neg1}, i.e.,
\begin{center} $\neg      [ \alpha_0, \beta_1  \dots,  \beta_{n-1}, \alpha_{n-1}, \beta_n ,\alpha_n](z_0,z_1)$
is equivalent\\ over Dedekind complete chains  to a $\DEA$ formula.
\end{center}
\begin{proof} (of Lemma \ref{lem:neg1})
If  the interval  $(z_0,z_1)$ is empty then the assertion is
immediate. We assume that $(z_0,z_1)$ is non-empty. Hence,  at least
one of the following cases holds:
\begin{description}
\item[Case 1] $\neg \alpha_0(z_0)$ or  $\KPLUS(\neg\beta_1)(z_0)$. 
\item[Case 2] $\alpha_{0}(z_0)$,  and $ \beta_{1}$  holds along
$(z_0,z_1)$.

\item[Case 3]
\begin{enumerate}
\item $ \alpha_0(z_0)\wedge  \neg\KPLUS(\neg\beta_1)(z_0)$, 
 and
 \item there
is $x\in (z_0,z_1)$ such that $\neg\beta_1(x)$.
\end{enumerate}
\end{description}
For each of these    cases we construct a   $\DEA$ formula
$\cond_i$ that describes it (i.e., Case $i$ holds\break iff $\cond_i$ holds)  and show that if $\cond_i$ holds, then
$\neg      [ \alpha_0, \beta_1  \dots,  \beta_{n-1}, \alpha_{n-1}, \beta_n ,\alpha_n](z_0,z_1)$
 is equivalent to  a $\DEA$ formula $\form_i$.
 Hence, $\neg      [ \alpha_0, \beta_1  \dots,  \beta_{n-1}, \alpha_{n-1}, \beta_n ,\alpha_n](z_0,z_1)$ is equivalent to $\vee_i [ \cond_i\wedge\form_i]$
 which is a $\DEA$ formula.

\noindent \textbf{Case 1}
This case is already explicitly described by  the $\DEA$ formula (in
the canonical expansion). In this case $\neg      [ \alpha_0,
\beta_1  \dots, \beta_{n-1}, \alpha_{n-1}, \beta_n
,\alpha_n](z_0,z_1)$ is equivalent to $\true$.

\noindent \textbf{Case 2}
This case is described by   a   $\DEA$ formula
$\alpha_{0}(z_0)\wedge \Alub{z}{z_0}{z_1} \beta_{1}$.
\sloppy{ In this case
 $\neg      [ \alpha_0,
\beta_1  \dots, \beta_{n-1}, \alpha_{n-1}, \beta_n
,\alpha_n](z_0,z_1)$ is equivalent to ``there is no $z\in (z_0,z_1)$
such that
  $[ \alpha_1, \beta_2  \dots,   \beta_{n} ,\alpha_{n}](z,z_1) $.''
By Corollary \ref{cor:cont}(2) this  is expressible by a $\DEA$
formula.
}

\noindent \textbf{Case 3}
The first condition of Case 3 is already explicitly described by a
$\DEA$ formula. When   the first   condition  holds,
 then the second  condition is equivalent to
 ``there is (a unique)  $r_0\in (z_0,z_1)$  such that
$r_0=\inf\{z\in (z_0,z_1)\mid \neg\beta_1(z)\}$''
(If $
\neg\KPLUS(\neg\beta_1)$ holds at $z_0$ and
 there is $x\in (z_0,z_1)$ such that $\neg\beta_1(x)$, then  such $r_0$ exists because we deal with  Dedekind complete chains.)
This $r_0$ is definable by the following  $\DEA$  formula, i.e., it
is a unique $z$ which satisfies it\footnote{We will use only existence and will not use
uniqueness.}:
\begin{equation}\label{eq:un}
\Inf (z_0, z,z_1):= z_0<z<z_1 \wedge \Alub {y}{z_0} {z}
\beta_1(y) \wedge(\neg\beta_1(z) \vee \KPLUS(\neg\beta_1)(z))
\end{equation}
Hence,   Case 3 is described by $ \alpha_0(z_0)\wedge \neg\KPLUS(\neg\beta_1)(z_0) 
\wedge \Elub{z}{z_0}{z_1} \Inf(z_0,
z,z_1)$ which is equivalent to an $\EA$ formula.


It is sufficient to show that $\Elub{z}{z_0}{z_1} \Inf(z)\wedge
\neg[ \alpha_0, \beta_1, \alpha_1,  \dots, \beta_{n+1}
,\alpha_{n+1}](z_0,z_1)$ is equivalent to a  $\DEA$ formula.

 We
prove this
 by induction on $n$.

The \emph{basis} is trivial.

\emph{Inductive step $n\mapsto n+1$.}

Define:
 \begin{align*}
 A^-_i(z_0,z):= & [\alpha_0,\beta_1,\dots ,\beta_i,\alpha_i](z_0,z) & i=1,\dots,n\\
 A^+_i(z,z_1):= & [\alpha_i,\beta_{i+1}, \dots \beta_{n+1},\alpha_{n+1}](z,z_1)& i=1,\dots,n\\
 A_i(z_0,z,z_1):= &  A^-_i(z_0,z)\wedge  A^+_i(z,z_1)& i=1,\dots,n\\
B^-_i(z_0,z):= & [\alpha_0,\beta_1,\dots
,\beta_{i-1},\alpha_{i-1},\beta_{i},\beta_i](z_0,z) & i=1,\dots,n+1\\
B^+_i(z,z_1):= &  [\beta_i , \beta_i, \alpha_{i} \beta_{i+1} \alpha_{i+1},  \dots ,\beta_{n+1},\alpha_{n+1}](z,z_1)  & i=1,\dots,n+1\\
B_i(z_0,z,z_1):= &  B^-_i(z_0,z)\wedge  B^+_i(z,z_1)  & i=1,\dots,n+1
    \end{align*}
\begin{figure} 
\begin{center}
\includegraphics[scale=0.8]{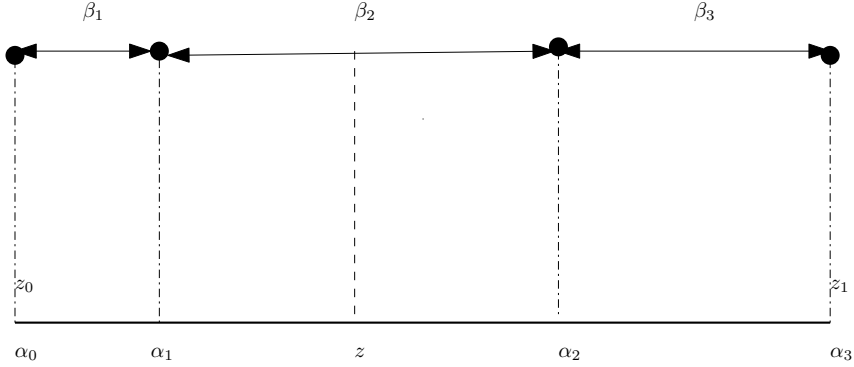}
\end{center}
 \caption{$B_2(z_0,z,z_1):=  [ \alpha_0, \beta_1 , \alpha_1,\beta_2,\beta_2](z_0,z)\wedge [\beta_2,\beta_2,\alpha_2,\beta_3,\alpha_3](z,z_1)$ }
\label{fig-b2}
\end{figure}

If the interval $(z_0,z_1)$ is non-empty, these definitions imply 
 \begin{align*}
        [ \alpha_0, \beta_1, \alpha_1,  \dots,   \beta_{n+1} ,\alpha_{n+1}](z_0,z_1)
        \Leftrightarrow
 \Alub{z}{z_0}{z_1} \big( \bigvee_{i=1}^n A_i\vee
\bigvee_{i=1}^{n+1} B_i\big)\\
 [ \alpha_0, \beta_1, \alpha_1,  \dots,   \beta_{n+1} ,\alpha_{n+1}](z_0,z_1)
        \Leftrightarrow
 \Elub{z}{z_0}{z_1} \big( \bigvee_{i=1}^n A_i\vee
\bigvee_{i=1}^{n+1} B_i\big)
 \end{align*}
Hence, for every $\varphi$
$$\Elub{z}{z_0}{z_1} \varphi(z)\wedge  \neg[ \alpha_0, \beta_1, \alpha_1,  \dots,   \beta_{n+1}
,\alpha_{n+1}](z_0,z_1)$$ is equivalent to
$$ \Elub{z}{z_0}{z_1} \big(\varphi(z) \wedge  \bigwedge_{i=1}^n \neg A_i\wedge \bigwedge_{i=1}^{n+1}
\neg B_i\big)
$$
In particular,
$$\Elub{z}{z_0}{z_1} \Inf(z)\wedge  \neg[ \alpha_0, \beta_1, \alpha_1,  \dots,   \beta_{n+1}
,\alpha_{n+1}](z_0,z_1)$$ is equivalent to
$$ \Elub{z}{z_0}{z_1} \big(  \Inf(z) \wedge \bigwedge_{i=1}^n \neg A_i\wedge \bigwedge_{i=1}^{n+1}
\neg B_i\big),
$$
where $ \Inf(z)$ was defined in equation \eqref{eq:un}.

 By the inductive assumption
 \begin{description}
 \item[(a)]
$\neg A_i$ is equivalent to a $\DEA$ formula for $i=1,\dots, n$.
\item[(b)]
$\neg B_i$ is equivalent to a $\DEA$ formula for $i=2,\dots,
 n$.
\end{description}
Recall $B_1:=B^-_1 \wedge B^+_1  $
and $B_{n+1}:=B^-_{n+1} \wedge B^+_{n+1}$.
\begin{description}
 \item[(c)]
 $\neg B^-_1$ and $ \neg B^+_{n+1}$ are
equivalent to   $\DEA$ formulas, by the induction basis.
 \item[(d)]
$\Inf(z)  \wedge \neg B^+_{1}(z,z_1) $ is equivalent to $ \Inf(z) $,
  because  if $ \Inf(z) $, then for no $x>z$, $\beta_1$
holds along  $[z,x)$.
 \item[(e)]
$\Inf(z) \wedge \neg B^-_{n+1}(z_0,z) $
is equivalent to
$\Inf(z)
\wedge (\text{``$\beta_1$ holds on $ (z_0,z)$''} \wedge \neg
B^-_{n+1}(z_0,z)) $. Since, by case 2, $ \text{``$\beta_1$ holds on $
(z_0,z)$''} \wedge \neg B^-_{n+1}(z_0,z)  $ is equivalent to a $\DEA$
formula, and $\Inf(z)$ is a $\DEA$ formula,  we conclude that
$\Inf(z)  \wedge \neg B^-_{n+1}(z_0,z) $  is
equivalent to a $\DEA$ formula.
\end{description}
Since  the set of  $\DEA$ formulas is closed under conjunction,
disjunction and $\exists$, by (a)-(e) we obtain that
$\Elub{z}{z_0}{z_1}\big(  \Inf(z)\wedge \bigwedge_{i=1}^n \neg A_i\wedge \bigwedge_{i=1}^{n+1}
\neg B_i\big)
$
is equivalent
to a  $\DEA$ formula. Therefore, $\Elub{z}{z_0}{z_1} \Inf(z)\wedge  \neg[ \alpha_0, \beta_1,
\alpha_1,  \dots, \beta_{n+1} ,\alpha_{n+1}](z_0,z_1)$
is also  a $\DEA$  formula.

This completes our proof of Lemma \ref{lem:neg1} and  of Proposition \ref{lem:neg}. 
\end{proof}

\section{Related Works} \label{sect:related}

Kamp's theorem was proved in
\begin{enumerate}
\item Kamp's thesis  \cite{Kam68} (proof $ >$ 100pages).

\item  Outlined by  Gabbay, Pnueli, Shelah   and Stavi \cite{GPSS80}
(Sect. 2)   for $\nat$  and stated that it can be extended to Dedekind
complete orders using game arguments.

\item Was proved by Gabbay \cite{Gab81} by separation arguments for $\nat$,
and extended to Dedekind complete order in \cite{GHR94}.

\item Was proved by Hodkinson \cite{Hod95} by game arguments and
simplified in \cite{Hod99} (unpublished).
\end{enumerate}
 A temporal logic has the
\emph{separation} property if its formulas can be equivalently rewritten as a boolean combination of
formulas, each of which depends only on the past, present  or future.
The separation property was introduced by  Gabbay \cite{Gab81}, and
 surprisingly,  a temporal logic which can express  $\G $ and  $\Gb $ has the separation property (over a  class $\mC$ of structures) iff it is
expressively  complete for \FOMLOiB  over $\mC$.

The separation proof for $\TL(\until,\since)$  over $\nat$  is manageable; however, over the real (and over Dedekind complete) chains it
contains many rules and transformations and   is not easy to follow.
Hodkinson and Reynolds \cite{HR06} write:
\begin{quote}
The
proofs of theorems 18 and 19 [Kamp's theorem over naturals and  over reals, respectively]  are direct,  showing that each formula can be separated. They are
tough and tougher, respectively. Nonetheless, they are effective, and so, whilst not quite
providing an algorithm  to determine  if  a set of connectives  is expressively complete, they do
suggest a potential way of telling in practice whether a given set of connectives is expressively
complete -- in Gabbay's words, \emph{try  to separate and see where
you get stuck!}
\end{quote}
The game arguments are easier to grasp, but they use complicated inductive
assertions. The proof in \cite{Hod99} proceeds roughly as follows. Let
$\mL_r$ be the set of $\TL(\until,\since)$ formulas of nesting depth
at most $r$.
 A formula of the form:  $\displaystyle \exists \bar{x} \forall y \chi(\bar{x}, y, \bar{z})$ where $ \bar{x} $ is an $n$-tuple of variables
 and $\chi$ is a quantifier free formula   over $\{<,=\}$ and  $\mL_r$-definable monadic predicates is called $\langle n,r\rangle$-decomposition formula.
 The main inductive assertion is proved by ``unusual back-and-forth games''
 and can be rephrased in logical terms  as    there is a function $f:\nat \rightarrow \nat$ such that for every
$n,r\in \nat$, the negation of   positive Boolean combinations
$\langle n,r\rangle$-decomposition formula is equivalent to a
positive Boolean combination of $\langle
f(n),(n+r)\rangle$-decomposition   formulas.

Our proof is inspired by  \cite{GPSS80} and 
\cite{Hod99}; however,
it avoids   games,
 and it     separates general logical equivalences and temporal arguments.

%
%
%
%
%

The temporal logic with the modalities $\until$ and $\since$ is not
expressively complete for
 $\IFOMLO$ over the rationals.
 Stavi introduced two additional
         modalities $ \suntil$ and $ \ssince $
        and proved that $\TLs$  is expressively
complete for $\IFOMLO$ over all linear orders \cite{GHR94}. In the forthcoming
 paper we prove Stavi's theorem.
The proof is  similar to our proof of Kamp's theorem; however, it
treats some additional cases related to gaps in orders, and replaces
$\EA$-formulas by slightly more general formulas.

\section{Future   fragment  of $\IFOMLO$ } \label{sect:future}
Many temporal formalisms studied in computer science deal  only with
future formulas,  whose truth value at any moment is determined by
what happens from a current  moment on.
 A formula (temporal, or monadic with a single free first-order  variable) $F$ is (\emph{semantically}) \emph{future}
 if for every chain $\mM$ and moment $t_0\in \mM$:
 $$\mM,t_0\models F \mbox{ iff } \mM|_{\geq t_0}, t_0\models F,$$
 where $\mM|_{\geq t_0}$ is the subchain of $\mM$ over the interval
 $[t_0,\infty)$.
For example,   $P\until Q$ and $\KPLUS(P)$ are future formulas,
while $P\since Q $ and  $\KMINUS(P)$ are not future ones.

For a set $B$ of modalities we denote by $\TL(B)$ the temporal logic
which uses only modalities from $B$. In particular, $\TL(\until)$ is
the temporal logic which uses the modality $\until$  and
$\TL(\until,\KMINUS)$ is the temporal logic  with modalities
$\until$ and $\KMINUS$.

It was shown in \cite{GPSS80}  that
Kamp's theorem holds also for \emph{future formulas} of $\IFOMLO$ over  $\omega=\langle \nat,~<\rangle$: 
\begin{thm}[Gabbay, Pnueli, Shelah, Stavi  \cite{GPSS80}] \label{thm:gpss}
 Every future $\IFOMLO$  formula  is equivalent over $\omega$-chains   to a $\TL(\until)$    formula.
\end{thm}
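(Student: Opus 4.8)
The plan is to replay the three-step argument of Section~\ref{sect:kamp} inside the \emph{future cone} of the current point and with $\since$ eliminated throughout. I first single out the \emph{future $\EA$-formulas}: those $\EA$-formulas in which the single free variable $z_0$ is the least existential point $x_0$ and the clause $(\forall y)^{<x_0}\beta_0(y)$ is dropped, so that the formula speaks only about $[z_0,\infty)$. For such a formula the temporal translation of Proposition~\ref{prop:form} degenerates to its forward half (the case $k=0$): the backward $\since$-chain disappears together with the $\Gb$-block, leaving only the $\until$-chain $A_0\wedge(B_1\until(A_1\wedge\cdots(A_n\wedge\G B_{n+1})\cdots))$, and since $\G F=\neg(\true\until\neg F)$ this is a $\TL(\until)$ formula. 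Hence every disjunction of future $\EA$-formulas with one free variable is equivalent to a $\TL(\until)$ formula.

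Next I reduce an arbitrary future formula to this setting. If $\varphi(x_0)$ is future then $\mM,t_0\models\varphi$ iff $\mM|_{\geq t_0},t_0\models\varphi$, so relativising every quantifier of $\varphi$ to $\{y:y\geq x_0\}$ yields an equivalent formula all of whose quantifiers range over the future cone of $x_0$. I would therefore carry out the structural induction of Proposition~\ref{prop:fo2ea} on these relativised formulas, with free variables all interpreted as $\geq x_0$ and $x_0$ always the least such point, showing each is equivalent over $\omega$ to a disjunction of future $\EA$-formulas anchored at $x_0$. The atomic, disjunction and existential cases carry over verbatim from Lemmas~\ref{lem:triv} and~\ref{lem:closure}, and two features of $\omega$ simplify the remaining work: $\omega$ has no limit points, so $\KPLUS(P)\equiv\false$ and the limit subcases of Section~\ref{sect:neg} vanish, while the formula $\Inf$ of~\eqref{eq:un} collapses to ``$z$ is the least point of $(z_0,z_1)$ with $\neg\beta_1$,'' itself a future $\EA$-formula.

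The hard part, exactly as for Kamp's theorem, is closure under negation, i.e.\ the future analogue of Proposition~\ref{lem:neg} and Lemma~\ref{lem:neg1}. Negation of a one-free-variable future $\EA$-formula is free, since such a formula is already a $\TL(\until)$ formula and $\TL(\until)$ is closed under $\neg$; the genuine difficulty is negating the two-free-variable interval formulas $[\alpha_0,\beta_1,\dots,\beta_n,\alpha_n](z_0,z_1)$ of Lemma~\ref{lem:neg1}. Here the asymmetry between past and future bites: Lemma~\ref{lem:neg1} rests on Corollary~\ref{cor:cont}, and while part~(1) (free \emph{right} endpoint) is proved with the forward predicates $F_i$ built from $\until$, part~(2) (free \emph{left} endpoint, \emph{pinned} right endpoint $z_1$) is proved by the mirror image using $\since$-predicates anchored at $z_1$. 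Those $\since$-predicates are not $\TL(\until)$-definable, so I cannot reuse them. I expect this to be the main obstacle.

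The resolution I would pursue exploits that over $\omega$ the anchor $z_0$ is the global minimum, so every predicate that Corollary~\ref{cor:cont}(2) reads backward from $z_1$ is in fact a property of the bounded interval $[z_0,z_1]$, hence expressible forward from $z_0$. Concretely, a statement such as ``some instance of the pattern ends at $z_1$'' can be turned into a \emph{positive} future $\EA$-pattern by locating the relevant landmark explicitly -- for instance the last $\neg\beta_1$-point before $z_1$ is captured by the block pattern $[\true,\true,\neg\beta_1,\beta_1,\true](z_0,z_1)$ -- so that no $\since$-definable atom is ever introduced. I would recast the two negation lemmas as a single induction on the number of blocks, arranging the cross-references so that each appeals only to strictly smaller instances (the Case~2 reduction of Lemma~\ref{lem:neg1} drops a block and lands in the free-left-endpoint case), thereby breaking the apparent circularity between Lemma~\ref{lem:neg1} and Corollary~\ref{cor:cont}(2). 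Granting this forward treatment of negation, Proposition~\ref{prop:fo2ea} acquires a future counterpart, and composing it with the $\until$-translation of the first paragraph yields Theorem~\ref{thm:gpss}.
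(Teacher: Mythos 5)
Your proposal is essentially sound, but it takes a genuinely different route from the paper's. The paper does not rebuild any of the negation machinery in forward-only form: it proves the stronger Theorem~\ref{thm:ghr94} (every future $\IFOMLO$ formula is equivalent over all Dedekind complete chains to a syntactically future $\TL(\until,\KMINUS)$ formula) and then obtains Theorem~\ref{thm:gpss} in one line, because $\KMINUS$ is identically false over discrete orders, so its atoms can simply be erased over $\omega$-chains. Theorem~\ref{thm:ghr94} itself rests on a symmetry trick (Lemma~\ref{lem:f-neg1}): the Section~\ref{sect:neg} proof of Lemma~\ref{lem:neg1} works verbatim over canonical $\TL(\since,\KPLUS)$-expansions (Lemma~\ref{lem:occ} needs only $\KPLUS$, and Case~2 needs the $\since$-based Corollary~\ref{cor:cont}(2)), and since the class of patterns $[\alpha_0,\beta_1,\dots,\beta_n,\alpha_n](z_0,z_1)$ is closed under order reversal, the mirror image of the \emph{whole lemma} yields the $\TL(\until,\KMINUS)$ version --- the mirror turns the problematic Corollary~\ref{cor:cont}(2) into the unproblematic Corollary~\ref{cor:cont}(1), and $\KPLUS$ into $\KMINUS$. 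You instead work directly over $\omega$ and exploit discreteness twice: the limit modalities vanish, and suprema are attained, so backward landmarks become positive $\EA$-patterns. This is workable, and your dependency ordering does terminate, though not quite as you phrase it: Lemma~\ref{lem:neg1} for $k$ blocks needs the Corollary~\ref{cor:cont}(2) analogue only for $k-1$ blocks (Case~2 and item~(e)), while the Corollary~\ref{cor:cont}(2) analogue for $k$ blocks reduces --- by locating the last $\neg\beta_k$-point $w^*$ and splitting the pattern according to where $w^*$ falls among its points --- to Corollary~\ref{cor:cont}(1) plus pinned-interval negations with at most $k$ blocks, so the well-founded order is ``Lemma~\ref{lem:neg1} for $k$, then Corollary~\ref{cor:cont}(2) for $k$, then Lemma~\ref{lem:neg1} for $k+1$.'' What your route buys is a self-contained proof over $\omega$ that never mentions $\KMINUS$; what it costs is that this splitting argument --- the one genuinely new piece of mathematics in your plan, which you leave at the level of a single example plus an assertion --- must actually be carried out, whereas the paper's mirroring argument gets the corresponding fact for free and simultaneously delivers the stronger Dedekind-complete statement.
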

 The situation is radically
different for the continuous time
$\langle \real ,  
<\rangle$. In \cite{HR03} it was shown that  $\TL(\until)$ is not
expressively complete for the future fragment of $\IFOMLO$ and there
is no easy way to remedy it. In fact,  it was shown in \cite{HR03}
that   there is no temporal logic with a finite set of modalities
which is expressively equivalent to the future fragment of $\IFOMLO$
over the Reals.

From the separation proof of Kamp's theorem in
  \cite{GHR94}
  it follows
that  every future $\IFOMLO$  formula  is equivalent over Dedekind complete chains to a  $\TL(\until,\KMINUS)$ formula.

This future-past mixture of $\until$ and $\KMINUS$ is  somewhat
better than the standard $\until$ - $\since$ basis in the following
sense: although $\KMINUS$ is (like $\since$) a past modality, it
does not depend on much of the past.  The formula $\KMINUS(P)$
depends just on an arbitrarily short `near past', and is actually
{independent of most of the past}. In this sense, we may say that it
is  an ``almost''   future formula.

\begin{defi}[Syntactically future $\TL(\until,\KMINUS)$ formulas]
A $\TL(\until,\KMINUS)$ formula is syntactically future if it is a
boolean combination of atomic formulas and formulas of the form
$\vp_1\until \vp_2$, where $\vp_1$ and $\vp_2$ are arbitrary
$\TL(\until,\KMINUS)$ formulas.
\end{defi}
The following  lemma immediately follows from the definition and the
observation that  $\mM|_{\geq t_0}, t_0\models \neg \KMINUS (\vp)$.
\begin{lem}\label{lem:synt-fut-uk0}
A syntactically future  $\TL(\until,\KMINUS)$ formula is future. A
$\TL(\until,\KMINUS)$ formula is future iff it is equivalent to a
syntactically future  $\TL(\until,\KMINUS)$ formula.
\end{lem}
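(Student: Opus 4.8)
The plan is to prove the two assertions of the lemma in turn, with the workhorse being a preliminary claim that truncating a chain below $t_0$ never changes the truth value of \emph{any} $\TL(\until,\KMINUS)$ formula at points strictly above $t_0$. Precisely, the claim is: for every $\TL(\until,\KMINUS)$ formula $\vp$ and every $t'>t_0$, $\mM,t'\models\vp$ iff $\mM|_{\geq t_0},t'\models\vp$. I would prove this by induction on the structure of $\vp$. Atoms depend only on $\mI$ at $t'$ and are immediate; the Boolean cases are routine. For $\vp_1\until\vp_2$ at $t'$, both the witness and the interval over which $\vp_1$ must hold lie entirely above $t'>t_0$, hence inside $\mM|_{\geq t_0}$, so the induction hypothesis applies. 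The only delicate case is $\KMINUS(\vp_1)$ at $t'$: by definition it holds iff $t'=\sup\{s<t'\mid \mM,s\models\vp_1\}$, i.e. iff the $\vp_1$-points accumulate at $t'$ from below. Since $t'>t_0$, any such accumulation is witnessed inside the open interval $(t_0,t')$, whose points all lie above $t_0$; the points $\le t_0$ are bounded away from $t'$ and so are irrelevant to the supremum. Hence truncation below $t_0$ leaves the condition intact, and the induction hypothesis on $\vp_1$ restricted to $(t_0,t')$ closes the case.

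Granting the claim, the first assertion is short. Let $F$ be syntactically future, i.e. a Boolean combination of atoms and formulas $\vp_1\until\vp_2$. To verify $\mM,t_0\models F$ iff $\mM|_{\geq t_0},t_0\models F$ it suffices, by the Boolean structure, to treat each atom and each top-level $\until$-subformula at $t_0$. An atom at $t_0$ depends only on $\mI$ at $t_0$, which is unchanged, while $\vp_1\until\vp_2$ at $t_0$ refers only to points $t'>t_0$, where the claim guarantees agreement. Therefore $F$ is future.

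For the second assertion, the direction from ``equivalent to a syntactically future formula'' to ``future'' is immediate: logical equivalence holds in every chain, in particular in both $\mM$ and $\mM|_{\geq t_0}$, so futureness transfers from the syntactically future representative (which is future by the first assertion) to $F$. For the converse, suppose $F$ is future and write it as a Boolean combination of atoms, top-level subformulas $\KMINUS(\psi_j)$, and top-level $\until$-subformulas. The driving observation is that at the first point $t_0$ of $\mM|_{\geq t_0}$ the set $\{s<t_0\}$ is empty, so its supremum does not equal $t_0$ and hence $\mM|_{\geq t_0},t_0\models\neg\KMINUS(\psi_j)$ for every $j$. Let $G$ be obtained from $F$ by replacing each top-level $\KMINUS(\psi_j)$ with $\false$; then $G$ is syntactically future. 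I would conclude $F\equiv G$ via the chain $\mM,t_0\models F$ iff (as $F$ is future) $\mM|_{\geq t_0},t_0\models F$ iff (the top-level $\KMINUS(\psi_j)$ all evaluate to $\false$ at $t_0$, while atoms and $\until$-subformulas are untouched) $\mM|_{\geq t_0},t_0\models G$ iff ($G$ is future by the first assertion) $\mM,t_0\models G$.

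I expect the main obstacle to be precisely the $\KMINUS$ case of the claim: one must argue carefully that the left-limit (supremum) condition defining $\KMINUS$ at a point $t'>t_0$ is insensitive to the part of the order lying below $t_0$, and in particular pin down the boundary behaviour that $\KMINUS(\psi)$ is false at the minimum of a chain (the supremum of the empty set), which is exactly the observation powering the converse direction. Everything else is bookkeeping over the Boolean structure, and no Dedekind completeness is required at any step.
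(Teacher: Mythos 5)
Your proof is correct and takes essentially the same route as the paper: the paper's entire proof is the remark that the lemma ``immediately follows from the definition and the observation that $\mM|_{\geq t_0}, t_0\models \neg \KMINUS(\vp)$,'' and your argument is precisely that observation made rigorous --- a structural induction showing truncation below $t_0$ preserves truth at points above $t_0$ (with $\KMINUS$ correctly identified as the only delicate case), followed by substituting $\false$ for the top-level $\KMINUS$-subformulas to obtain the syntactically future equivalent. You have simply supplied the bookkeeping the paper leaves implicit.
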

The next  theorem (implicitly) appears in \cite{GHR94} (Chapter 8).
\begin{thm}\label{thm:ghr94}
 Every future $\IFOMLO$  formula  is equivalent over Dedekind complete chains  to a syntactically future $\TL(\until,\KMINUS)$    formula.
\end{thm}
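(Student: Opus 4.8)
The plan is to re-run the proof of Kamp's theorem from Sections~\ref{sect:ea-formulas}--\ref{sect:neg}, but confined to a future-oriented version of every notion, so that the temporal formula produced at the end is syntactically future and uses only $\until$ and $\KMINUS$. The first step is a routine reduction. Since $\varphi(x_0)$ is future, it is equivalent over all chains to its relativization $\varphi^{\geq x_0}$, obtained by bounding every quantifier of $\varphi$ by ${\geq}\,x_0$: indeed $\mM,t_0\models\varphi^{\geq x_0}$ iff $\mM|_{\geq t_0},t_0\models\varphi$ iff $\mM,t_0\models\varphi$, the last equivalence being futureness. Hence I may assume every quantifier ranges over $[x_0,\infty)$, so that all witnesses lie weakly to the right of the free variable. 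I then call an $\EA$-formula \textbf{future} if its single free variable is the least witness $x_0$ and the far-past conjunct $\beta_0$ is $\true$ (so it constrains only $[z_0,\infty)$), and I work throughout in the canonical $\TL(\until,\KMINUS)$-expansion, i.e.\ with $\mE[\Sigma]$ built from $\TL(\until,\KMINUS)$ formulas.

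The two easy halves then go through by inspection. The translation half is immediate from Proposition~\ref{prop:form}: a future $\EA$-formula with one free variable has no past part, so the conjunction produced there collapses to its future component $A_0\wedge(B_1\until(A_1\wedge\cdots\wedge(A_n\wedge\G B_{n+1})\cdots))$. This is a boolean combination of atoms and $\until$-formulas, hence syntactically future; moreover $\G F=\neg(\true\until\neg F)$ and $\KPLUS(F)=\neg((\neg F)\until\true)$ are themselves syntactically future, so the right-limit tests occurring inside the construction stay within $\TL(\until,\KMINUS)$ without spoiling syntactic futureness, and Lemma~\ref{lem:synt-fut-uk0} makes the result future. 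The closure half (future analogues of Lemmas~\ref{lem:triv} and \ref{lem:closure}) is unchanged, since conjunction, disjunction and existential quantification of formulas whose witnesses all lie in $[z_0,\infty)$ again have this property.

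The hard part will be the future analogue of closure under negation, i.e.\ Proposition~\ref{lem:neg} together with Lemmas~\ref{lem:occ}, \ref{lem:neg1} and Corollary~\ref{cor:cont}. The \emph{forward} steps transfer directly: an interval $(z_0,z_1)$ is scanned rightward with $\until$, and the critical \emph{infimum} points $r_0=\inf\{\,z:\dots\}$ are right-limits, captured by $P(r_0)\vee\KPLUS(P)(r_0)$ with $\KPLUS$ expanded as above. The obstruction is that the original argument also invokes the left--right mirror freely --- Corollary~\ref{cor:cont}(2) is got from (1) ``by symmetry,'' using $\since$ and $\Gb$, and the negation of the right-anchored pieces $A^+_i$, $B^+_{n+1}$ in Lemma~\ref{lem:neg1} rests on it --- and these past modalities are forbidden in a syntactically future formula. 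I must re-derive every such step by forward scanning from $z_0$. The real difficulty is the detection of \emph{supremum} (left-limit) points: a pattern anchored at the right endpoint $z_1$, read forward, forces one to name ``the last point before which such-and-such holds,'' which is a left-limit of the relevant predicate. A left-limit is not definable by any future modality --- this is exactly the non-expressibility of \cite{HR03} for $\TL(\until)$ --- and it is precisely here that $\KMINUS$ becomes indispensable: the condition is $\KMINUS(\gamma)$ for the appropriate $\gamma$, and since it is only ever evaluated at interior points $z>z_0$ reached through an $\until$-scan, introducing it as an $\mE[\Sigma]$-atom keeps the formula syntactically future. This also reconciles the construction with \cite{HR03}: $\KMINUS$ is a past modality, so $\{\until,\KMINUS\}$ is not a set of future modalities, yet its syntactically future combinations suffice because $\KMINUS$ supplies exactly the ``almost future'' left-limit information.

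Once this negation lemma is in place, Proposition~\ref{prop:fo2ea} re-runs to show $\varphi^{\geq x_0}$ equivalent over Dedekind complete chains to a disjunction of future $\EA$-formulas, and the translation half turns that disjunction into a syntactically future $\TL(\until,\KMINUS)$ formula. I expect essentially all the work to sit in the mirror cases; the remaining effort is the bookkeeping invariant that $\KMINUS$ is \emph{never} attached to the anchoring (least) variable but only to interior witnesses --- equivalently, that the left-endpoint predicate $\alpha_0$ of every future $\EA$-formula we construct stays future --- so that the top-level conjunct $A_0$ does not reference the past of $z_0$ and syntactic futureness is preserved end to end.
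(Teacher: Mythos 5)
Your plan is correct in substance and its skeleton is the paper's own: reduce a future formula to one whose quantifiers are bounded below by the free variable (Lemma~\ref{lem:synt-fut-fo}), work with interval-anchored $\EA$-formulas over the canonical $\TL(\until,\KMINUS)$-expansion (Lemmas~\ref{lem:fut-closure} and \ref{lem:from:fo-to-uk}), and translate at the end (Lemma~\ref{lem:fut-inf}). Where you genuinely diverge is at the crux, closure under negation, and there the paper is much lighter than what you propose. You plan to re-derive by hand every step of Lemma~\ref{lem:neg1} that leans on the left--right mirror, with $\KMINUS$ detecting suprema. The paper (Lemma~\ref{lem:f-neg1}) gets the same conclusion with \emph{no new construction}: it observes that the existing proof of Lemma~\ref{lem:neg1} is already one-sided --- the only modalities it ever uses are $\KPLUS$ (in Lemma~\ref{lem:occ}, in Case~1, and in the $\Inf$ formula) and $\since$ (via Corollary~\ref{cor:cont}(2), used in Case~2 and item~(e)) --- so the lemma holds verbatim over canonical $\TL(\since,\KPLUS)$-expansions; since the bracket formulas $[\alpha_0,\beta_1,\dots,\alpha_n](z_0,z_1)$ and Dedekind completeness are both invariant under order reversal, the mirror image of that statement is exactly the required negation lemma over $\TL(\until,\KMINUS)$-expansions. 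Executing your plan carefully would simply reproduce this mirror proof (case analysis anchored at $z_1$, suprema detected by $\neg\beta_n(r)\vee\KMINUS(\neg\beta_n)(r)$, and Corollary~\ref{cor:cont}(1) --- which uses only $\until$ and $\KPLUS$ --- in place of (2)), so there is no gap; but note two cautions. First, your phrase ``forward scanning from $z_0$'' suggests keeping the case analysis anchored at $z_0$, which would force you to prove a future version of Corollary~\ref{cor:cont}(2) itself; that is tantamount to the whole mirror lemma and risks circularity, whereas flipping the anchor to $z_1$ makes everything fall out of already-future ingredients. Second, your closing invariant (never attach $\KMINUS$ to the anchoring variable) is maintainable but superfluous: the paper instead notes that the final $\TL(\until,\KMINUS)$ formula is semantically future, and Lemma~\ref{lem:synt-fut-uk0} then licenses replacing every top-level occurrence of $\KMINUS$ by $\false$. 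What your route buys is a self-contained, vivid account of why $\KMINUS$ is exactly the missing ``almost future'' left-limit information (nicely consistent with \cite{HR03}); what the paper's route buys is economy --- the future theorem costs essentially one observation about which modalities the original proof consumed, plus a single appeal to symmetry.
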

Since $\KMINUS \vp$ is equivalent to $\false$ over discrete orders, we
obtain that Theorems \ref{thm:gpss} is an instance of Theorem
\ref{thm:ghr94}.

Theorem \ref{thm:ghr94}  is easily obtained by a slight refinement
of  our proof of Kamp's theorem. We outline its proof  in the rest
of this section.

\begin{defi}[$(z_0,z_1)$-$\EA$ formula] Let $z_0$ and $z_1$ be two variables. A formula  $z_0>z_1$, $z_0=z_1$  or of the
form  $[ \alpha_0, \beta_1  \dots,  \beta_{n-1}, \alpha_{n-1},
\beta_n ,\alpha_n](z_0,z_1)$ is called a $(z_0,z_1)$-$\EA$ formula.
 A formula is $(z_0,z_1)$-$\DEA$ formula if it is
equivalent to a disjunction of $(z_0,z_1)$-$\EA$ formulas.
\end{defi}

 \begin{lem}[closure properties]\label{lem:fut-closure}
The set of $(z_0,z_1)$-$\DEA$ formulas is  closed under disjunction
and  conjunction. If $\vp_1$ is a  $(z_0,z_1)$-$\DEA$ formula and
$\vp_2$ is a  $(z_1,z_2)$-$\DEA$ formula, then $\Elub{z_1}{z_0}{z_2}
(\vp_1\wedge \vp_2 )$ is a $(z_0,z_2)$-$\DEA$ formula.
\end{lem}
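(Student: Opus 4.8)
The plan is to establish the three assertions separately: closure under disjunction is immediate, closure under conjunction reduces to a \emph{merge of witness sequences}, and the existential-concatenation clause is obtained by gluing two sequences at the quantified point $z_1$. For disjunction there is nothing to do, since a disjunction of two disjunctions of $(z_0,z_1)$-$\EA$ formulas is again such a disjunction. For conjunction I would first distribute $\wedge$ over the two disjunctions, reducing to the conjunction of a single $(z_0,z_1)$-$\EA$ formula $\vp_1$ with a single $(z_0,z_1)$-$\EA$ formula $\vp_2$. The degenerate disjuncts are handled directly: conjoining with $z_0>z_1$ either reproduces $z_0>z_1$ or is unsatisfiable, and conjoining $z_0=z_1$ with a main-form formula collapses it to its single point constraint at $z_0$. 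The essential case is thus the conjunction of two main-form formulas $[\alpha_0,\beta_1,\dots,\alpha_p](z_0,z_1)$ and $[\gamma_0,\delta_1,\dots,\gamma_q](z_0,z_1)$.

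Both main-form formulas assert increasing witness sequences spanning the \emph{same} endpoints $z_0$ and $z_1$, so I would express their conjunction as a finite disjunction over all \emph{shuffle patterns} describing how the two interior witness sequences interleave, allowing coincidences of individual points. Having fixed a pattern, the union of the two witness sets forms a single increasing sequence $z_0=w_0<\dots<w_r=z_1$; at each $w_s$ the point constraint is the conjunction of whichever $\alpha_i$ or $\gamma_j$ is pinned there, together with the interval constraint ($\delta_j$, respectively $\beta_i$) of the \emph{other} sequence inside whose open interval $w_s$ falls; and on each merged open interval $(w_{s-1},w_s)$ the constraint is the conjunction of the two $\beta/\delta$ covering it. Since all of these are quantifier-free one-variable formulas, each pattern yields a $(z_0,z_1)$-$\EA$ formula, and their finite disjunction is the required $(z_0,z_1)$-$\DEA$ formula. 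This is exactly the specialization of Lemma \ref{lem:triv}(1) that keeps the endpoints pinned at $z_0$ and $z_1$.

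For the concatenation clause I would again distribute $\exists z_1$ and $\wedge$ over the two disjunctions, so it suffices to treat a single $(z_0,z_1)$-$\EA$ formula $\vp_1$ and a single $(z_1,z_2)$-$\EA$ formula $\vp_2$ under the bounded quantifier $\Elub{z_1}{z_0}{z_2}$. The constraint $z_0<z_1<z_2$ renders every degenerate disjunct unsatisfiable ($z_0=z_1$ and $z_0>z_1$ for $\vp_1$; $z_1=z_2$ and $z_1>z_2$ for $\vp_2$), so both may be taken in main form, say $\vp_1=[\alpha_0,\beta_1,\dots,\beta_p,\alpha_p](z_0,z_1)$ and $\vp_2=[\gamma_0,\delta_1,\dots,\delta_q,\gamma_q](z_1,z_2)$ with $p,q\geq 1$. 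Their witness sequences share the endpoint $z_1$, so concatenating them and conjoining the two constraints pinned at that shared point gives
$$\Elub{z_1}{z_0}{z_2}(\vp_1\wedge\vp_2)\ \equiv\ [\alpha_0,\beta_1,\dots,\beta_p,\alpha_p\wedge\gamma_0,\delta_1,\dots,\delta_q,\gamma_q](z_0,z_2),$$
in which the former $z_1$ is now an existentially quantified interior point, automatically strictly between $z_0$ and $z_2$. This is a $(z_0,z_2)$-$\EA$ formula, and disjoining over all pairs of disjuncts yields the desired $(z_0,z_2)$-$\DEA$ formula.

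The main obstacle is the bookkeeping in the merge step for conjunction: one must enumerate all interleavings of the two interior sequences together with all coincidence patterns, and then correctly collect, for each merged point and each merged interval, both the pinned point constraints and the ambient interval constraints of the opposite sequence. The concatenation clause is by contrast essentially syntactic, the only points to verify being that the bounded quantifier eliminates the degenerate disjuncts and that the single shared endpoint correctly receives the conjunction $\alpha_p\wedge\gamma_0$.
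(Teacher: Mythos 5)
Your proof is correct and is essentially the argument the paper intends: the paper states Lemma \ref{lem:fut-closure} without proof, treating it as the endpoint-pinned analogue of Lemmas \ref{lem:triv} and \ref{lem:closure}, whose justification is exactly your merge of interleaved witness sequences plus distributivity of $\exists$ over $\vee$. Your additional care with the degenerate disjuncts ($z_0>z_1$, $z_0=z_1$) and with the fact that the bounded quantifier $\Elub{z_1}{z_0}{z_2}$ forces both conjuncts into main form, the shared point receiving the conjunction $\alpha_p\wedge\gamma_0$, fills in details the paper leaves implicit.
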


The set of $(z_0,z_1)$-$\DEA$  formulas is not closed under
negation. However, we show     that the negation of a  $\DEA$
formula is equivalent to a $(z_0,z_1)$-$\DEA$  formula in the
expansion of the chains by all
 $\TL(\until,\KMINUS)$ definable predicates.

\begin{defi}[The canonical $\TL(\until,\KMINUS)$ and $\TL(\since,\KPLUS)$  expansions]  Let $\mM$ be a $\Sigma$ chain. We denote
 by  $\mE[\Sigma,\TL(\until,\KMINUS)]$  the set of unary predicate names
$\Sigma\cup\{A~\mid A$ is an  $\TL(\until,\KMINUS)$ formula over
$\Sigma ~\}$.
 The canonical $\TL(\until,\KMINUS)$-expansion of $\mM$ is  an expansion of
$\mM$ to an $\mE[\Sigma,\TL(\until,\KMINUS)]$-chain,  where each
predicate's name $A\in  \mE[\Sigma,\TL(\until,\KMINUS)]$ is
interpreted as $\{a\in \mM\mid \mM,a\models A\}$.
 We say that first-order formulas
in the signature $\mE[\Sigma,  \TL(\until,\KMINUS)] \cup\{<\}$ are
equivalent over $\mM$ (respectively, over a class of $\Sigma$-chains
$\mC$)  if they are equivalent in the canonical
$\TL(\until,\KMINUS)$-expansion of $\mM$ (in  the canonical
$\TL(\until,\KMINUS)$-expansion of every $\mM\in \mC$).
The canonical $\TL(\since,\KPLUS)$-expansion of a chain $\mM$ is
defined similarly.
\end{defi}

The next lemma  implies  that Lemma \ref{lem:neg1} holds over  the
canonical $\TL(\since,\KPLUS)$-ex\-pan\-sions and over canonical
 $\TL(\until,\KMINUS)$-expansions.
\begin{lem}\label{lem:f-neg1}\hfill
\begin{enumerate}
\item
 $\neg      [ \alpha_0, \beta_1  \dots,  \beta_{n-1}, \alpha_{n-1}, \beta_n ,\alpha_n](z_0,z_1)$
is equivalent   over the canonical $\TL(\since,\KPLUS)$-expansions
of Dedekind complete chains to a $(z_0,z_1)$-$\DEA$-formula.
\item Dually,
 $\neg      [ \alpha_0, \beta_1  \dots,  \beta_{n-1}, \alpha_{n-1}, \beta_n ,\alpha_n](z_0,z_1)$
is equivalent   over the canonical\break $\TL(\until,\KMINUS)$-expansions
of Dedekind complete chains to a $(z_0,z_1)$-$\DEA$-formula.
\end{enumerate}
%
\end{lem}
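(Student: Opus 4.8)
The plan is to re-run the proof of Lemma~\ref{lem:neg1} essentially verbatim, together with its two supporting results, Lemma~\ref{lem:occ} and Corollary~\ref{cor:cont}, while maintaining two extra invariants: that every modality occurring in the constructed formulas lies in the prescribed pair, and that every disjunct produced is a $(z_0,z_1)$-$\EA$ formula, i.e.\ is local to the interval $[z_0,z_1]$. The locality is automatic from Lemma~\ref{lem:fut-closure} at each point where $[z_0,z_1]$ is cut at an interior point, since its interval-composition clause is exactly what glues a $(z_0,z)$-$\DEA$ formula to a $(z,z_1)$-$\DEA$ formula. Because order reversal takes a Dedekind complete chain to a Dedekind complete chain, interchanges $\since$ with $\until$ and $\KPLUS$ with $\KMINUS$, and carries the class of $(z_0,z_1)$-$\EA$ formulas onto itself (renaming $z_0\leftrightarrow z_1$), statements (1) and (2) are mirror images; so I would prove only (1), the $\{\since,\KPLUS\}$ version, and read off (2) by duality.

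I would proceed by induction on $n$, proving the refined (local) statement directly. An audit of the proof of Lemma~\ref{lem:neg1} shows that its only direct appeals to a modality are in Cases~1 and~3 and in the critical-point formula $\Inf$ of \eqref{eq:un}, and all of these are occurrences of $\KPLUS$, used solely to locate $r_0=\inf\{z\in(z_0,z_1)\mid\neg\beta_1(z)\}$; these are allowed in~(1). In the inductive step the middle pieces $\neg A_i$ and $\neg B_i$ come from the induction hypothesis applied to shorter patterns (and are already local), and the extreme pieces together with Case~2 are the only ones that call an external tool, namely Corollary~\ref{cor:cont}(2); Corollary~\ref{cor:cont}(1), which uses $\until$, is never invoked. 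Thus the single substantive modification is to reprove Corollary~\ref{cor:cont}(2) inside $\{\since,\KPLUS\}$ rather than by the naive mirror argument, which brings in $\KMINUS$.

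For that reproof I would anchor at the right endpoint $z_1$ and march backward with $\since$: set $\hat G_0:=\alpha_0$ and $\hat G_j:=\alpha_j\wedge(\beta_j\since\hat G_{j-1})$ for $j=1,\dots,n$, which are $\TL(\since)$-definable, hence atoms of the canonical $\TL(\since,\KPLUS)$-expansion. Unwinding the $\since$ shows that $\hat G_n(z_1)$ holds exactly when $[\alpha_0,\dots,\alpha_n](z,z_1)$ holds for some $z<z_1$ (possibly below $z_0$), and, exactly as in the proof of Corollary~\ref{cor:cont}(1) with the $F_i$, that $[\alpha_0,\dots,\alpha_n](z,z_1)$ holds for some $z\in(z_0,z_1)$ iff $\hat G_n(z_1)$ holds and $(z_0,z_1)$ carries an increasing sequence witnessing $\hat G_0,\dots,\hat G_{n-1}$. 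Negating and feeding $\hat G_0,\dots,\hat G_{n-1}$ into Lemma~\ref{lem:occ}, used exactly as proved (peeling the leftmost witness, so that the only limit modality it contributes is $\KPLUS$), gives a $(z_0,z_1)$-$\DEA$ formula over $\{\since,\KPLUS\}$.

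The step I expect to be the main obstacle is precisely this reproof of Corollary~\ref{cor:cont}(2): one must verify that the two directional ingredients of the original argument can be decoupled, so that ``reach the far endpoint $z_1$'' is carried out by $\since$ while the only one-sided limit reasoning forced by Dedekind completeness is still performed by $\KPLUS$ (through Lemma~\ref{lem:occ}), never by $\KMINUS$ --- whereas a literal mirror of Corollary~\ref{cor:cont}(1) flips both ingredients together and so produces $\KMINUS$. Once this decoupling is established, the locality bookkeeping (that each disjunct arising in Cases~1--3, in the $A_i,B_i$ split at $r_0$, and inside $\Oc_n$ is a $(z_0,z_1)$-$\EA$ formula, with the trivial cases $z_0\ge z_1$ absorbed into the base $(z_0,z_1)$-$\EA$ formulas) is routine given Lemma~\ref{lem:fut-closure}, and statement~(2) follows by order reversal.
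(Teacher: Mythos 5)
Your proposal is correct, and its skeleton is the paper's own: prove (1) by re-running the proof of Lemma~\ref{lem:neg1}, noting that its only direct modality occurrences are $\KPLUS$ (in Cases 1 and 3 and in the formula $\Inf$ of~\eqref{eq:un}) and its only external input is Corollary~\ref{cor:cont}(2), keep the locality bookkeeping via Lemma~\ref{lem:fut-closure}, and obtain (2) as the mirror image. The one place you genuinely depart from the paper is the step you single out as the main obstacle: justifying Corollary~\ref{cor:cont}(2) over $\TL(\since,\KPLUS)$-expansions. The paper does this in one line, by duality: order reversal carries the already-established statement ``Corollary~\ref{cor:cont}(1) holds over $\TL(\until,\KMINUS)$-expansions'' exactly onto ``Corollary~\ref{cor:cont}(2) holds over $\TL(\since,\KPLUS)$-expansions,'' since mirroring interchanges $\until$ with $\since$ and $\KMINUS$ with $\KPLUS$. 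Your reason for rejecting this route --- that a literal mirror ``produces $\KMINUS$'' and is therefore inadmissible in case (1) --- is a misconception: $\KMINUS(F)$ is by definition the abbreviation $\neg((\neg F)\since \true)$, a $\TL(\since)$ formula, so $\KMINUS$-atoms are available in the canonical $\TL(\since,\KPLUS)$-expansion; this is symmetric to the paper's own remark that Lemma~\ref{lem:occ} works over $\TL(\until,\KMINUS)$-expansions ``because $\KPLUS$ is equivalent to a $\TL(\until)$ formula.'' So there is no decoupling problem to solve. That said, your substitute argument is itself correct: the backward-marching formulas $\hat G_0:=\alpha_0$ and $\hat G_j:=\alpha_j\wedge(\beta_j\since\hat G_{j-1})$ are the exact mirror of the $F_i$, the claimed equivalence is the mirror of the paper's observation (proved by the mirrored induction), and feeding $\hat G_0,\dots,\hat G_{n-1}$ into the un-mirrored, $\KPLUS$-based $\Oc_n$ of Lemma~\ref{lem:occ} yields a $(z_0,z_1)$-$\DEA$ formula whose atoms use only $\since$ and $\KPLUS$, slightly sharper than what duality gives (no $\KMINUS$-atoms at all), though nothing in the paper requires that sharpening. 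In short: a valid proof, with its main effort spent circumventing an obstacle that is not there.
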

\begin{proof}
Actually, our proof  of Lemma  \ref{lem:neg1}, as it is, works for
 the
canonical  $\TL(\since,\KPLUS)$-expansions of  Dedekind complete
chains, when ``$\EA$ formulas" are replaced by ``$(z_0,z_1)$-$\EA$ formulas.''

Indeed, Lemma \ref{lem:occ} uses only modality $\KPLUS$. Thus,
exactly the same proof works for $\TL(\since,\KPLUS)$-expansions and
for  $\TL(\until,\KMINUS)$-expansions (because $\KPLUS$ is
equivalent to a $\TL(\until)$ formula).

In the proof of Corollary \ref{cor:cont}(1) we used Lemma
\ref{lem:occ}  and $\until$  modality. Hence, it holds for
 $\TL(\until,\KMINUS)$-expansions.
 Corollary \ref{cor:cont}(2) is dual and it holds for
 $\TL(\since,\KPLUS)$- expansions.

In proof of Lemma \ref{lem:neg1} we use standard logical
equivalences and Corollary \ref{cor:cont}(2). Hence, it works as it
is for the canonical $\TL(\since,\KPLUS)$- expansions of Dedekind
complete chains. This proves (1). Item (2) is the mirror image of
(1).
\end{proof}

\begin {nota} \label{denote:f-simple}
    We use the abbreviated notation
$ [ \alpha_0, \beta_1  \dots,   \alpha_{n-1},
 \beta_n,\alpha_n, \beta_{n+1}](z_0,\infty)$ for
 $$\eqalign{\E x_n \dots \E x_1 \E x_0
  z_0=x_0                          &\land~ \left(x_n >   \dots
> x_1 > x_0\right)\cr
                         &\land ~\AND _{j=0}^{n} \alpha_j(x_j)
                         \land~
                             \AND _{j=1}^{n}  \intprop {\beta_j}
                             {x_{j-1}}  {x_{j}}
                           \wedge ~(\forall y)_{>x_n}
                        {\beta_{n+1}} (y);}
$$ such formulas will be called $(z_0,\infty)$-formulas; we
use the similarly abbreviated notation
    $ [\beta_0, \alpha_0, \beta_1  \dots,   \alpha_{n-1}, \beta_n,\alpha_n](-\infty,z_0)$
    for the $\EA$-formula
$$ \eqalign{\E x_n \dots \E x_1 \E x_0
  z_0=x_n                          &\land~ \left(x_n > \dots
> x_1 > x_0\right)\cr
                        & \wedge ~\AND _{j=0}^{n} \alpha_j(x_j)
                           \wedge~
                             \AND _{j=1}^{n}  \intprop {\beta_j}
                             {x_{j-1}}  {x_{j}}
                          \wedge~\Intprop {\beta_0} {x_{0}}.   }                                                                                                          %
$$
 \end {nota}

\begin{lem}\label{lem:fut-inf}
 $  [ \alpha_0, \beta_1  \dots,   \alpha_{n-1},
 \beta_n,\alpha_n, \beta_{n+1}](z_0,\infty)$
%
 over canonical  $\TL(\until,\KMINUS)$-ex\-pan\-sions
is equivalent
  to a  $\TL(\until,\KMINUS)$ formula.

%
%
\end{lem}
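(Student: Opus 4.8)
The plan is to reuse the translation of Proposition~\ref{prop:form} essentially verbatim, observing that because $z_0=x_0$ is the \emph{leftmost} quantified point and there is no constraint on the segment before $x_0$, only the future (nested-$\until$) half of that translation is needed.

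First I would note that the $(z_0,\infty)$-formula is literally the $\EA$-formula treated in Proposition~\ref{prop:form} with the free variable identified with the leftmost point $x_0$ (the case $k=0$) and with the ``$\beta_0$ holds everywhere before $x_0$'' conjunct absent (equivalently $\beta_0=\true$). Writing $A_j,B_j$ for quantifier-free one-variable formulas equivalent to $\alpha_j,\beta_j$, I claim that $[\alpha_0,\beta_1,\dots,\alpha_n,\beta_{n+1}](z_0,\infty)$ is equivalent to
\[ A_0\wedge (B_1\until (A_1 \wedge (B_2\until \cdots ( A_{n-1}\wedge (B_n \until (A_n \wedge \G B_{n+1}))\cdots )))). \]
The verification is the routine unwinding of the $\until$ semantics already performed in Proposition~\ref{prop:form}: reading the nested $\until$ from the outside in produces precisely a strictly increasing sequence $z_0=x_0<x_1<\cdots<x_n$ with $A_j$ true at $x_j$ and $B_j$ true throughout each open interval $(x_{j-1},x_j)$, while the innermost conjunct $\G B_{n+1}$ forces $\beta_{n+1}$ everywhere after $x_n$; conversely any such witnessing sequence makes the nested formula true. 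The base case $n=0$ collapses to $A_0\wedge \G B_{n+1}$.

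The only place the future setting matters is the observation that \emph{no} past modality occurs. In Proposition~\ref{prop:form} the past half of the translation, $A_k\wedge(B_{k-1}\since\cdots(A_0\wedge\Gb B_0)\cdots)$, is produced solely by the points lying to the left of the anchor and by the ``before $x_0$'' conjunct; here the anchor is $x_0$ itself and $\beta_0=\true$, so this half degenerates to $A_0$. Consequently the whole formula is built from Boolean connectives, $\until$, and the $\until$-abbreviation $\G$ alone, i.e.\ it is a $\TL(\until)$ formula.

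Finally I would record that each $\alpha_j,\beta_j$ is a quantifier-free one-variable formula over $\mE[\Sigma,\TL(\until,\KMINUS)]$, hence a Boolean combination of atoms that are $\TL(\until,\KMINUS)$-definable; thus each $A_j,B_j$ is a $\TL(\until,\KMINUS)$ formula, and substituting them into the display exhibits the $(z_0,\infty)$-formula as a $\TL(\until,\KMINUS)$ formula over $\Sigma$, as required. In fact the displayed formula is a Boolean combination of atoms ($A_0$) and a formula of the form $\vp_1\until\vp_2$, hence even syntactically future. (As the construction uses only $\until$ and $\G$, the equivalence holds over all chains, not merely Dedekind complete ones.) There is no genuine obstacle: the entire content lies in matching the open-interval semantics of $\until$ with the interval conditions ``$\beta_j$ holds on $(x_{j-1},x_j)$,'' which Proposition~\ref{prop:form} already establishes.
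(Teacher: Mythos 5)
Your proposal is correct and takes essentially the same route as the paper: the paper's entire proof of Lemma~\ref{lem:fut-inf} is ``by a straightforward formalization as in the proof of Proposition~\ref{prop:form},'' and your specialization of that translation to the case $k=0$ with $\beta_0=\true$, which keeps only the nested-$\until$ half $A_0\wedge (B_1\until(A_1\wedge(\cdots(A_n\wedge\G B_{n+1})\cdots)))$, is exactly that formalization. One small caveat on your final parenthetical: the displayed formula is syntactically future only when $A_0$ is read as an atom of the canonical expansion---after unfolding $A_0$ into a $\TL(\until,\KMINUS)$ formula over $\Sigma$ it may have $\KMINUS$ at the top level, which is why the paper obtains syntactic futurity separately via Lemma~\ref{lem:synt-fut-uk0} rather than from this lemma---but since the lemma claims only equivalence to a $\TL(\until,\KMINUS)$ formula, this side remark does not affect the correctness of your proof.
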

\begin{proof}
By a straightforward formalization as in the proof of Proposition
\ref{prop:form}.
%
%
%
\end{proof}

\begin{defi}[Syntactically future $\IFOMLO$  formulas]
A $\IFOMLO$  formula $\vp(z_0) $  is syntactically future if all its
quantifiers are bounded quantifiers of the form $(\forall y)_{>z_0}
$ and $(\exists y)_{>z_0} $.
\end{defi}
The following  lemma immediately follows from the definition.
\begin{lem} \label{lem:synt-fut-fo}
A syntactically future  $\IFOMLO$  formula is future. A $\IFOMLO$
formula $\vp(z_0)$  is future iff it is equivalent to a
syntactically future $\IFOMLO$  formula.
\end{lem}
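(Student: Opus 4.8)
The plan is to prove the two assertions separately, the first yielding the easy (``if'') half of the second.

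For the first assertion I would argue by structural induction on syntactically future formulas, proving the slightly stronger statement needed to carry the induction through: for every subformula $\psi$ built according to the syntactically future grammar and every assignment that maps all free variables of $\psi$ into $[t_0,\infty)$, the truth value of $\psi$ in $\mM$ agrees with its truth value in $\mM|_{\geq t_0}$ under the same assignment. The atomic cases $x<y$, $x=y$, $P(x)$ are immediate, since the order and the interpretation of $\Sigma$ in $\mM|_{\geq t_0}$ are inherited from $\mM$ and all arguments lie in the common domain $[t_0,\infty)$; the Boolean cases are trivial. The key step is the quantifier case: because every quantifier is bounded by $>z_0$ and $z_0$ is assigned $t_0$, the quantified variable ranges exactly over $(t_0,\infty)$, which is the same set in $\mM$ and in $\mM|_{\geq t_0}$, and every such witness keeps all free variables of the body in $[t_0,\infty)$, so the induction hypothesis applies. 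Instantiating the claim with $z_0$ as the only free variable gives that a syntactically future $\vp(z_0)$ is future.

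For the second assertion the ``if'' direction is exactly the first assertion, so I concentrate on ``only if''. Given a future $\vp(z_0)$, I would form its relativization $\vp^{\geq z_0}$ to $[z_0,\infty)$, obtained by replacing each $\exists x\,\psi$ by $(\exists x)_{\geq z_0}\psi$ and each $\forall x\,\psi$ by $(\forall x)_{\geq z_0}\psi$. A routine relativization lemma, again by structural induction on $\vp$, shows that for every chain $\mM$ and every $t_0\in\mM$,
$$\mM,t_0\models\vp^{\geq z_0}\quad\text{iff}\quad\mM|_{\geq t_0},t_0\models\vp.$$
Since $\vp$ is future, the right-hand side is equivalent to $\mM,t_0\models\vp$, whence $\vp\equiv\vp^{\geq z_0}$ over all chains.

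It then remains to rewrite $\vp^{\geq z_0}$, whose quantifiers are bounded by $\geq z_0$, into syntactically future form, whose quantifiers are bounded by $>z_0$. I would do this by eliminating the boundary point $z_0$: choosing an innermost $\geq z_0$-bounded quantifier, whose body $\psi$ already uses only $>z_0$ bounds, the equivalences $(\exists x)_{\geq z_0}\psi\equiv\psi[z_0/x]\vee(\exists x)_{>z_0}\psi$ and $(\forall x)_{\geq z_0}\psi\equiv\psi[z_0/x]\wedge(\forall x)_{>z_0}\psi$ replace it by a formula using only $>z_0$ bounds, the substitution $\psi[z_0/x]$ introducing no new quantifier. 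Each such step strictly decreases the number of $\geq z_0$-bounded quantifiers, so after finitely many steps we obtain an equivalent formula in which every quantifier is bounded by $>z_0$, that is, a syntactically future formula equivalent to $\vp$. The only point demanding care is this last conversion: one must verify that the recursive substitution stays within the syntactically future fragment and terminates, which the count of $\geq z_0$-bounded quantifiers as a well-founded measure guarantees.
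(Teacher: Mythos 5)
Your proposal is correct. The paper offers no argument for this lemma at all---it states that it ``immediately follows from the definition''---and your relativization argument (induction for the syntactically-future-implies-future direction, relativizing quantifiers to $[z_0,\infty)$ for the converse, then trading each $(\exists x)_{\geq z_0}$ for $\psi[z_0/x]\vee(\exists x)_{>z_0}\psi$ to land in the $>z_0$-bounded fragment) is exactly the routine reasoning the paper leaves implicit, with the boundary-point elimination being the only step that genuinely needed the care you gave it.
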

\begin{defi} Let $(z_0,z_1,\dots ,z_k)$ be a
sequence of distinct variables.
 A formula
is $(z_0,z_1,\dots ,z_k, \infty)$-$\EA$ formula if it is a
conjunction $\bigwedge_{i\leq k} \vp_i$, where $\vp_k$ is
$(z_k,\infty)$-$\EA$ formula and $\vp_i$ is $(z_i,z_{i+1})$-$\EA$
formulas for $i<k$. A formula is a $(z_0,z_1,\dots ,z_k,
\infty)$-$\DEA$ formula if it is equivalent to a disjunction of
$(z_0,z_1,\dots ,z_k, \infty)$-$\EA$ formulas.
\end{defi}

\begin{lem}\label{lem:from:fo-to-uk} Let $\vp(z_0,z_1,\dots ,z_k)$ be a  $\IFOMLO$  formula
with  free variables in $\{z_i\mid i\leq k\}$  and all its
quantifiers are bounded quantifiers of the form $(\forall y)_{>z_0}
$ and $(\exists y)_{>z_0} $. Then, there is $(z_0,z_1,\dots ,z_k,
\infty)$-$\DEA$ formula $\psi$ such that $z_0<z_1<\dots <z_k\wedge
\vp$ is equivalent  over the canonical
$\TL(\until,\KMINUS)$-expansions of Dedekind complete chains to
$z_0<z_1<\dots <z_k\wedge \psi$.
\end{lem}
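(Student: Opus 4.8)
The plan is to prove the lemma by induction on the structure of $\vp$, reading the statement as quantified over all $k$ and all strictly increasing tuples of free variables. Concretely, I show that---relative to the ordering $z_0<z_1<\dots<z_k$ and over the canonical $\TL(\until,\KMINUS)$-expansions of Dedekind complete chains---the class of $(z_0,\dots,z_k,\infty)$-$\DEA$ formulas forms a Boolean algebra that is moreover closed under the bounded quantifier $(\exists y)_{>z_0}$. Throughout I silently identify a $(z_i,z_{i+1})$-$\DEA$ formula (or a $(z_k,\infty)$-$\DEA$ formula) with the $(z_0,\dots,z_k,\infty)$-$\DEA$ formula obtained by padding the remaining intervals with the trivial constraints $[\true,\true,\true](z_j,z_{j+1})$ and $[\true,\true](z_k,\infty)$, which under the ordering assert exactly $z_j<z_{j+1}$ and $\true$.

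For the \textbf{atomic} base case, an order atom $z_i<z_j$ or $z_i=z_j$ is, under the ordering, equivalent to $\true$ or to $\false$, and $P(z_i)$ is obtained by placing $P$ as the endpoint condition $\alpha_0$ of the $(z_i,z_{i+1})$-piece (of the $(z_k,\infty)$-piece when $i=k$). \textbf{Disjunction} is immediate. For \textbf{conjunction} I distribute and combine interval by interval: two conjuncts over the same bounded interval merge by Lemma~\ref{lem:fut-closure}, while two conjuncts over $(z_k,\infty)$ merge because, by Lemma~\ref{lem:fut-inf}, each is equivalent to a $\TL(\until,\KMINUS)$ formula applied at $z_k$, hence to an atomic predicate of the canonical expansion; the same fact makes the $(z_k,\infty)$-pieces closed under all Boolean operations.

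The \textbf{negation} step is where the expansion is essential. If the subformula $\vp_1$ is handled by a $(z_0,\dots,z_k,\infty)$-$\DEA$ formula $\psi_1=\bigvee_j\bigwedge_i\vp_{j,i}$, then under the ordering $\neg\vp_1\equiv\bigwedge_j\bigvee_i\neg\vp_{j,i}$. For a bounded-interval piece over $(z_i,z_{i+1})$, Lemma~\ref{lem:f-neg1}(2) gives that its negation is a $(z_i,z_{i+1})$-$\DEA$ formula; for the $(z_k,\infty)$-piece, Lemma~\ref{lem:fut-inf} rewrites it as $C(z_k)$ for some $\TL(\until,\KMINUS)$ formula $C$, whose negation $(\neg C)(z_k)$ is again a $(z_k,\infty)$-$\EA$ formula. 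Padding and closing under $\orl$ and $\andl$ then yields a $(z_0,\dots,z_k,\infty)$-$\DEA$ formula. (The cases of $\andl$ and of the bounded $\forall$ are absorbed here, the latter via $(\forall y)_{>z_0}\vp_1\equiv\neg(\exists y)_{>z_0}\neg\vp_1$.)

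The main work is the \textbf{existential} step for $\vp=(\exists y)_{>z_0}\vp_1$, which I expect to be the principal obstacle. I would split on the position of the witness $y$ among $z_1,\dots,z_k$: either $y=z_i$ for some $i\ge 1$, or $z_i<y<z_{i+1}$ for some $0\le i\le k$ (with $z_{k+1}=\infty$). In a coincidence case the substitution $y:=z_i$ turns $\vp_1$ into a formula over $z_0,\dots,z_k$ to which the induction hypothesis applies directly. In a strict-between case I apply the induction hypothesis to the structurally smaller $\vp_1$ with respect to the extended ordered tuple $z_0<\dots<z_i<y<z_{i+1}<\dots<z_k$---legitimate since every quantifier of $\vp_1$ remains bounded by $z_0$ and the hypothesis ranges over all tuples---obtaining a $(z_0,\dots,z_i,y,z_{i+1},\dots,z_k,\infty)$-$\DEA$ formula. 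Eliminating $y$ affects only the two pieces adjacent to it, the $(z_i,y)$- and $(y,z_{i+1})$-pieces (or the $(z_k,y)$- and $(y,\infty)$-pieces when $i=k$); by the merging clause of Lemma~\ref{lem:fut-closure} and its evident $(z_k,\infty)$-analog---which merely concatenates two adjacent interval-formulas at the shared endpoint $y$, whose existential binding is already present in the combined interval-formula---their $\Elub{y}{z_i}{z_{i+1}}$-combination collapses to a single $(z_i,z_{i+1})$- (resp.\ $(z_k,\infty)$-) $\DEA$ formula, while the remaining, $y$-free, pieces float out unchanged. Disjoining over all position cases gives the required $(z_0,\dots,z_k,\infty)$-$\DEA$ formula. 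The delicate points are the book-keeping for the reordering of variables and checking that the concatenation at $y$ conjoins the two endpoint conditions without imposing spurious constraints across the merge.
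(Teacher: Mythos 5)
Your proposal is correct and follows essentially the same route as the paper: the paper's proof is exactly ``by Lemmas~\ref{lem:fut-closure}, \ref{lem:f-neg1}(2), \ref{lem:fut-inf} and a straightforward structural induction,'' and your argument is precisely that induction, using those three lemmas in the same roles (closure/merging for conjunction and the existential step, the negation lemma for bounded pieces, and the translation to $\TL(\until,\KMINUS)$ for the $(z_k,\infty)$-piece). Your elaboration of the witness-position case split and the padding conventions supplies the detail the paper leaves implicit, with no gap.
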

\begin{proof} By  Lemmas \ref{lem:fut-closure},  \ref{lem:f-neg1}(2), \ref{lem:fut-inf} and a
straightforward structural induction.
\end{proof}
Now we are ready to prove   Theorem \ref{thm:ghr94}.
\begin{proof} (of Theorem \ref{thm:ghr94}) Assume that  $\vp(z_0)$ is a future $\IFOMLO$ formula.
By Lemma \ref{lem:synt-fut-fo} w.l.o.g we can assume that all its
quantifiers are  bounded quantifiers of the form $(\forall y)_{>z_0}
$ and $(\exists y)_{>z_0} $. By \ref{lem:from:fo-to-uk},  it is
equivalent to $(z_0,\infty)$-$\DEA$ formula. Hence, by Lemma
\ref{lem:fut-inf}, it is equivalent to a $\TL(\until,\KMINUS)$
formula. Therefore, by Lemma  \ref{lem:synt-fut-uk0} it is
equivalent to a syntactically future $\TL(\until,\KMINUS)$ formula.
\end{proof}

In \cite{PR12} we erroneously stated that the analog of Proposition
\ref{prop:fo2ea} holds  for $\TL(\until,\KMINUS)$-expansions.
However,  ``$P$ is unbounded from below'' is expressible by  a
$\IFOMLO$ sentence $\forall x \exists y (y<x \wedge P(y))$;  yet
  there is no $\DEA$ formula which expresses ``$P$ is
unbounded from below" over the canonical
$\TL(\until,\KMINUS)$-expansions of integer-chains. 
We  state the following Proposition
  for the sake of completeness.

\begin{prop}\label{lem:sake-comp}
Every $\IFOMLO$ formula is equivalent   over the canonical
$\TL(\until,\KMINUS)$-expansions of Dedekind complete chains to a
positive boolean combination  of $\EA$ formulas and sentences  of
the form ``$P$ is unbounded from below.''
\end{prop}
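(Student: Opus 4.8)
The plan is to establish Proposition~\ref{lem:sake-comp} by structural induction, showing that the class $\mathcal{F}$ of positive boolean combinations of $\EA$ formulas and sentences of the form ``$P$ is unbounded from below'' is closed, over the canonical $\TL(\until,\KMINUS)$-expansions of Dedekind complete chains, under $\vee$, $\wedge$, $\exists$ and $\neg$ (the atomic case being trivial, since atomic formulas are already $\EA$). Closure under $\vee$ and $\wedge$ is immediate because $\mathcal{F}$ consists of positive boolean combinations. For $\exists x$ I would first write a member of $\mathcal{F}$ in the form $\bigvee_i (S_i \wedge D_i)$, where $S_i$ is a conjunction of unboundedness sentences (hence variable-free) and $D_i$ is a conjunction of $\EA$ formulas; since $\exists$ distributes over $\vee$ and the $S_i$ pull outside the quantifier, it suffices that $\exists x\,D_i$ be a $\DEA$ formula, which is exactly Lemma~\ref{lem:triv}(1),(3) with distributivity. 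For negation I would use De Morgan to reduce to negating atoms: the negation of ``$P$ is unbounded from below'' is the $\EA$ sentence $\exists x\,(\forall y)^{<x}\neg P(y)$, so the only real work is to show $\neg\psi\in\mathcal{F}$ for an $\EA$ formula $\psi$.

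By Lemma~\ref{lem:triv}(2) I may assume $\psi$ has at most two free variables, and then, exactly as at the start of Section~\ref{sect:neg}, decompose it into a left-unbounded factor $\psi_0(z_0)=[\beta_0,\alpha_0,\dots,\beta_m,\alpha_m](-\infty,z_0)$, a right-unbounded factor $\psi_1(z_1)$, and a bounded-interval factor $\varphi(z_0,z_1)$, so that $\neg\psi\equiv\neg\psi_0\vee\neg\psi_1\vee\neg\varphi$. The interval factor is handled by Lemma~\ref{lem:f-neg1}(2), which already delivers $\neg\varphi$ as a $(z_0,z_1)$-$\DEA$ formula over the $\TL(\until,\KMINUS)$-expansions; the right-unbounded $\psi_1$ is, by Lemma~\ref{lem:fut-inf}, equivalent to a $\TL(\until,\KMINUS)$ formula, hence atomic in the canonical expansion, so $\neg\psi_1$ is atomic and a fortiori $\EA$. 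Everything therefore reduces to the left-unbounded factor, and this is precisely where the unboundedness sentences are forced (recall that, in contrast, the stronger Proposition~\ref{prop:fo2ea} fails over these expansions).

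To show $\neg\psi_0(z_0)\in\mathcal{F}$ I would argue by induction on the number of points, mirroring the proof of Lemma~\ref{lem:neg1} but case-splitting on the behaviour of the backward search in the unbounded past. Writing $\psi_0(z_0)=\exists x_0\,\big((\forall y)^{<x_0}\beta_0(y)\wedge[\alpha_0,\beta_1,\dots,\beta_m,\alpha_m](x_0,z_0)\big)$, the cap $(\forall y)^{<x_0}\beta_0(y)$ carves out a downward-closed set $G$ of admissible starting points. If $\neg\beta_0$ is unbounded from below then $G=\emptyset$, so $\psi_0$ is simply false and $\neg\psi_0$ is governed by the unboundedness sentence ``$\neg\beta_0$ is unbounded from below''; otherwise $\neg\beta_0$ is bounded from below---an $\EA$ sentence---and the admissible $x_0$ are confined below a left limit that, crucially, is detectable by $\KMINUS$ (a left-limit modality, available here), after which the search runs only over a bounded interval and can be negated by the already established machinery of Lemma~\ref{lem:occ} and Corollary~\ref{cor:cont}.

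The main obstacle is taming this unbounded-below quantifier in the absence of a genuine past modality. Since $\TL(\until,\KMINUS)$ offers the left-limit operator $\KMINUS$ but no right-limit operator $\KPLUS$, I cannot name an infimum approached from above and must keep admissibility in the unanalysed form $(\forall y)^{<x}\beta$; consequently the induction cannot be run on $\neg\psi_0$ in isolation. As in Section~\ref{sect:neg}, the interval, right-unbounded and left-unbounded negation sub-cases must be woven into a single induction on the depth of the formula, so that the bounded-interval universal quantifications produced after localisation are discharged by the inductive hypothesis rather than circularly by the statement under proof. Once this bookkeeping is arranged, the unbounded case contributes exactly the sentences ``$P$ is unbounded from below'' and nothing more, which is why the normal form of Proposition~\ref{lem:sake-comp} augments the $\EA$ formulas by precisely these sentences.
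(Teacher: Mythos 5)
Your overall plan coincides with the paper's own: structural induction with closure under $\vee$, $\wedge$, $\exists$ (pulling the variable-free unboundedness sentences outside the quantifier), negation reduced via Lemma \ref{lem:triv}(2) and the decomposition of Section \ref{sect:neg} into a left-unbounded factor $\psi_0$, a right-unbounded factor $\psi_1$ and an interval factor $\varphi$, with $\neg\varphi$ handled by Lemma \ref{lem:f-neg1}(2), $\neg\psi_1$ by Lemma \ref{lem:fut-inf}, and the sentences ``$P$ is unbounded from below'' entering only through $\neg\psi_0$. This is exactly the paper's reduction to its stated observation that $\neg[\beta_0,\alpha_0,\dots,\beta_n,\alpha_n](-\infty,z_0)$ is a positive boolean combination of $(-\infty,z_0)$-$\EA$ formulas and unboundedness sentences, ``proved almost in the same way as Lemma \ref{lem:neg1}.''

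However, your execution of that one remaining step --- which is the entire content of the proposition beyond already-proved lemmas --- has genuine errors. First, you interchange the limit modalities, and your third and fourth paragraphs contradict each other about it: the bound $r=\inf\{y \mid \neg\beta_0(y)\}$ on the admissible starting points is an infimum approached from above, so it is detected by $\KPLUS$, not $\KMINUS$; and $\KPLUS$ \emph{is} available over $\TL(\until,\KMINUS)$-expansions, since $\KPLUS(F)$ abbreviates $\neg((\neg F)\until\true)$, a $\TL(\until)$ formula --- the paper relies on precisely this fact in the proof of Lemma \ref{lem:f-neg1}. Hence the ``main obstacle'' your last paragraph is organized around (that the infimum cannot be named) does not exist, and the substitute you offer --- weaving the sub-cases ``into a single induction'' while keeping admissibility ``unanalysed'' --- is not an argument but a restatement of the difficulty. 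Second, even once $r$ is named, your claim that ``the search runs only over a bounded interval'' is false: the starting point $x_0$ still ranges over the unbounded-below set $(-\infty,r]$ (on which the cap becomes automatic), so Lemma \ref{lem:occ} and Corollary \ref{cor:cont}, stated for bounded intervals $(z_0,z_1)$, do not apply as they stand. What is actually needed is to rerun that machinery over intervals of the form $(-\infty,z)$ in its mirrored $\TL(\until,\KMINUS)$ form: roll the pattern up from the left with $\until$ as in Corollary \ref{cor:cont}(1), localize \emph{suprema} $\sup\{y<z\mid\dots\}$ rather than infima (these exist by Dedekind completeness because the relevant sets are bounded above by $z$, and are detected by $\KMINUS$), and observe that the recursion bottoms out in cap formulas $(\forall y)^{<w}(\cdots)$, which are legitimate $(-\infty,w)$-$\EA$ formulas. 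Only the case distinction on whether $\neg\beta_0$ is unbounded from below --- which you do identify correctly --- contributes the extra sentences.
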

The additional  step needed for the proof of  Proposition
\ref{lem:sake-comp} is an observation that
 $\neg  [\beta_0, \alpha_0, \beta_1  \dots,
\alpha_{n-1},
 \beta_n,\alpha_n](-\infty,z_0)$ is equivalent
over the canonical $\TL(\until,\KMINUS)$-ex\-pan\-sions of Dedekind
complete chains to a positive boolean combinations of $(-\infty,z_0)$ -$\EA$ formulas
and sentences of the form ``$P$ is unbounded from below.''
   This is proved  almost in the same way as  Lemma
\ref{lem:neg1}.

\bigskip \noindent \textbf{{Acknowledgments}} I am  very  grateful  to
Yoram Hirshfeld for numerous insightful  discussions,
and to the
anonymous referees for their helpful suggestions.

\end{document}